\documentclass[11pt]{article}

\usepackage[margin=1.1in]{geometry}
\usepackage{amsmath,amssymb,amsthm}
\usepackage{graphicx}
\usepackage{booktabs}
\usepackage{tabularx}
\usepackage[round]{natbib}
\usepackage{xcolor}
\usepackage[colorlinks=true,linkcolor=blue!50!black,citecolor=blue!50!black,urlcolor=blue!50!black]{hyperref}
\usepackage{enumitem}
\usepackage{microtype}

\newtheorem{proposition}{Proposition}
\newtheorem{lemma}{Lemma}
\newtheorem{corollary}{Corollary}
\theoremstyle{definition}
\newtheorem{definition}{Definition}
\newtheorem{assumption}{Assumption}
\theoremstyle{remark}
\newtheorem{remark}{Remark}

\newcolumntype{Y}{>{\raggedright\arraybackslash}X}

\title{\textbf{Strategic Technical Debt: A Real Options Approach to
Early-Stage Software Experimentation}}

\author{Rashid Azarang\\
\small Independent Researcher $\cdot$ Mentu, San Pedro Garza Garc\'ia, Nuevo Le\'on, Mexico\\
\small \texttt{rashid@mentu.ai} $\cdot$ ORCID 0009-0008-5528-4246 \and
Mohammad Reza Azarang Esfandiari, Ph.D.\\
\small Professor Emeritus, Departamento de Ingenier\'ia Industrial y de Sistemas,\\
\small Tecnol\'ogico de Monterrey, Campus Monterrey, Monterrey, Mexico\\
\small \texttt{mazarang@itesm.mx} $\cdot$ ORCID 0009-0005-7413-7610}

\date{\small Draft v0.5 $\cdot$ August 16, 2026.\\
\small Working paper. Theory with a specified (not executed) empirical
program; contains no empirical results.}

\begin{document}
\maketitle

\begin{abstract}
\noindent Technical debt is treated, almost universally, as an engineering
pathology: a liability incurred through haste and repaid through
suffering. This paper argues that under the conditions that define
early-stage software work (high hypothesis uncertainty, cheap
experiments, and the freedom to abandon), deliberately incurred
technical debt is not a pathology but a rationally priced financial
instrument: a call option on the validated product, purchased at a
discount that is largest exactly when uncertainty is highest. We make
three contributions. First, a \emph{demarcation}: debt is
\emph{strategic}\footnote{``Strategic debt'' also names an unrelated 2026 corporate-strategy construct (deferred strategy fundamentals; Zenodo record 21582207) that carries the debt metaphor out of software; the two frameworks share a phrase and nothing else.} when its expected cost loads on the success branch of
the venture (repaid only if the hypothesis validates) and \emph{toxic}
when it imposes unconditional cost while held (security exposure, data
loss, corrupted experimental signal), a boundary we state formally and
that renders the popular ``prudent vs.\ reckless'' intuition testable.
Second, a \emph{sequential model}: a finite-horizon dynamic program over
belief and debt stock whose solution yields four results: a \emph{shadow price of debt} $\varphi<1$ equal to the risk-discounted
probability of repayment; a \emph{technical-debt overhang} (the belief
threshold for scaling rises with the debt stock, an exercise-threshold
comparative static in the tradition of \citealp{mcdonald_siegel_1986},
related to but mechanistically distinct from \citealp{myers_1977}); a
\emph{refactoring-pivot theorem} (absent carrying costs, optimal
repayment concentrates at the commitment boundary, predicting a
refactoring burst at product--market fit, a pattern practitioners
report but, to our knowledge, no repository study has measured,
registered here as a falsifiable prediction); and a \emph{volatility
result} (the debt build holds a call where the robust build holds the
underlying, so under risk-neutral valuation mean-preserving spreads in
outcome value favor debt). A
\emph{pivot-salvage correction} shows the folk rule ``maximum debt at
maximum uncertainty'' is wrong whenever failure redirects rather than
terminates the venture and the salvage differential clears the
discounted cost premium ($\Delta\Pi > (C_r - C_s)/\delta$). Third, a set of falsifiable propositions with a
two-test primary empirical program advanced for pre-registration and
execution: validation-event refactoring timing with a
funding-confound design, and the first repository-history measurement
of pivot salvage, with three further propositions specified as
successors. Quoted quantities from the model are illustrative
calibration, not estimates. \medskip

\noindent\textbf{Keywords:} technical debt; real options; software
experimentation; startups; lean methodology; debt overhang;
refactoring; pre-registration.
\end{abstract}

% =====================================================================
\section{Introduction}\label{sec:intro}

The technical-debt metaphor entered software engineering as a
\emph{defense} of expedience: \citet{cunningham_1992} coined it to
explain to non-engineers why shipping ``not-quite-right code'' first
could be the correct move, provided the debt was later repaid. Three
decades of subsequent research inverted the valence. The literature
measures debt's \emph{interest} (degraded velocity, defect risk,
mounting rework; \citealp{kruchten_2012,tom_2013,besker_2018}) and
manages its \emph{repayment}
\citep{guo_seaman_2011,ampatzoglou_2015,yli_huumo_2016}. Practitioner
discourse followed: debt is confessed, deplored, and scheduled for
elimination. The moral framing is so settled that the field's central
taxonomies (deliberate vs.\ inadvertent, prudent vs.\ reckless;
\citealp{fowler_2009,mcconnell_2007}) exist mainly to separate
forgivable debt from blameworthy debt.

This paper takes Cunningham's original position seriously and asks what
it takes to make it \emph{rigorous} rather than rhetorical. The setting
is early-stage software work: a venture (a startup, a new product line,
a research prototype) holds a hypothesis $H$ about what the market
wants, assigns it a validation probability $p$ that is genuinely low,
and can buy information about $H$ by building and shipping an
experiment. The build can be \emph{robust} (production-grade, at cost
$C_r$) or \emph{debt-laden} (expedient, at cost $C_s \ll C_r$),
leaving behind a repayment obligation $D$. The observation that anchors
everything that follows is elementary but underexploited: \emph{the
repayment obligation is contingent}. If the hypothesis is refuted, the
code is discarded and the debt is never repaid. The venture holds a
call option on the validated product with strike $D$, plus a default
option that a financial borrower does not have: walking away costs
nothing, because the creditor is the venture's own future, a future
that, on the refuted branch, never arrives.

Three literatures each hold a piece of this argument, and none holds
the whole. Real-options analysis entered software design with
\citet{sullivan_1999} and \citet{baldwin_clark_2000}, but prices
\emph{flexibility of design} (modularity as embedded options), not the
\emph{debt instrument} itself. The economics of technical debt has
expected-cost models (\citealp{schmid_2013} already includes the
probability that a debt item ever requires repayment) but treats
uncertainty as a parameter to average over, not as the source of an
exercise policy: there is no abandonment decision, no repayment timing,
no derived predictions about \emph{when} debt is repaid. And the
entrepreneurship-as-experimentation literature
\citep{kerr_2014,camuffo_2020,koning_2022,gans_2019} establishes that
early ventures rationally buy information through cheap staged
experiments, and that startups in fact accumulate debt deliberately
while doing so \citep{giardino_2016,klotins_2018}, but has no formal
account of the engineering instrument that finances the purchase. The
contribution of this paper is the coupling: we price the debt
instrument inside the experimentation problem and extract its testable
consequences.

\subsection{Results preview}

Section~\ref{sec:construct} defines the construct. The load-bearing
definition is a \emph{cost-loading} criterion: a debt item is
\textbf{strategic} when its expected cost is carried by the success
branch (repayment happens only if the venture survives to scale) and
\textbf{toxic} when it imposes cost unconditionally while held: security exposure, unprotected data, compliance breach, or corruption
of the experimental signal itself. This is a sharper boundary than
prudent/reckless because it is a property of the \emph{payoff
structure}, checkable per item, rather than of the decision-maker's
state of mind, and it yields the paper's necessary-condition
prediction: at no level of hypothesis uncertainty is a toxic shortcut
preferable to taking the same shortcut without the hazard.

Section~\ref{sec:twoperiod} develops the two-period model.
Proposition~\ref{prop:rule} gives the exercise-adjusted debt rule
\[
C_r - C_s \;>\; \delta\!\left[\,p\,D + (1-p)\,\Delta\Pi\,\right],
\]
which strictly tightens the naive expected-cost comparison
($C_s + pD < C_r$) by a term the folk argument omits: $\Delta\Pi$, the
\emph{pivot salvage} of the robust artifact on the refuted branch.
The correction bites hardest exactly where the folk argument feels
strongest: as $p \to 0$, debt's advantage does \emph{not}
grow without bound; it converges to $(C_r - C_s) - \delta\Delta\Pi$, which is negative
whenever salvage exceeds the discounted cost premium. The practical
distinction is between \emph{experiments run to kill} a hypothesis
(failure terminates; salvage $\approx 0$; debt strongly favored) and
\emph{experiments run to steer} (failure redirects; robust components
survive the pivot; robustness self-finances through its embedded
switching option). ``Move fast and break things'' is a theorem about
the first kind of experiment and a fallacy about the second.

Section~\ref{sec:sequential} is the paper's core: a finite-horizon
dynamic program over the venture's belief $q_t$ (updated by Bayes from
experiment outcomes) and accumulated debt stock $D_t$, with runway as
option maturity. Four results follow. \emph{(i)} A unit of promised
repayment costs not its face value but its \textbf{shadow price}
$\varphi(q) = \mathbb{E}[\delta^{\tau}\mathbf{1}\{\text{commit}\}] < 1$,
the risk-discounted probability that the venture ever reaches the
repayment boundary; in the calibration of
Figure~\ref{fig:dp} (all parameters and solution code committed),
$\varphi(0.3) \approx 0.23$ (finite-difference evaluation across the model's debt increment; illustrative calibration, as throughout):
at an even-odds-against belief, a promise
to refactor costs less than a quarter of its face value.
\emph{(ii)} A \textbf{technical-debt overhang}
(Proposition~\ref{prop:overhang}): the belief required to commit and
scale rises with the debt stock ($\bar q$ climbs from $0.57$ to $0.88$
across the calibrated debt range): accumulated shortcuts raise the
evidential bar the product must clear. We borrow the name from
\citeauthor{myers_1977}'s (\citeyear{myers_1977}) debt overhang while
flagging that the mechanism here is optimal stopping, not a
multi-claimant agency conflict (\S\ref{sec:sequential}).
\emph{(iii)} A \textbf{refactoring-pivot theorem}
(Proposition~\ref{prop:pivot}): with no carrying cost, early repayment
is weakly dominated; optimal repayment concentrates at the
commitment boundary. The signature practitioners report (a refactoring
burst when a product finds market fit) is thus \emph{derived} as
optimal option exercise rather than assumed; and because, to our
knowledge, no repository study has measured the pattern as an
event-time phenomenon, its timing enters the empirical program as a
registered prediction rather than a redescription of lore.
With a carrying cost (interest) the theorem yields a threshold:
repay early only when the interest rate exceeds the
survival-adjusted discount, the formal content of the practitioner
heuristic ``pay down high-interest debt first.''
\emph{(iv)} A \textbf{volatility result}
(Proposition~\ref{prop:vega}): the debt build holds a call (strike
$D$) where the robust build holds the underlying, so under
risk-neutral valuation a mean-preserving spread in the validated
product's value raises debt's relative advantage. Uncertainty is not
merely a circumstance in which debt is excusable; it is the input that
prices the debt option \emph{up}, though, as
Remark~\ref{rem:riskneutral} makes explicit, this prices the
instrument and does not predict the behavior of an undiversified,
risk-averse founder.

Section~\ref{sec:danger} develops the toxicity boundary, including two
classes the informal discourse misses: \emph{contagious} debt, whose
repayment cost grows endogenously as other components build against
the shortcut (named there as an informal extension; the sequential
model's debt increment is exogenous, and formalizing endogenous growth
is future work), and \emph{epistemic} debt, shortcuts that corrupt
the experiment's signal quality (no instrumentation, flaky demos) and
thereby destroy the very information the debt was incurred to buy.
Section~\ref{sec:empirical} converts the propositions into a phased,
pre-registerable empirical program whose design confronts the
survivorship bias that would otherwise be fatal to repository studies
in this area. Section~\ref{sec:discussion} discusses limitations
honestly: this is a theory paper; the calibration is illustrative; the
program is specified, not executed.

\subsection{Position in a research program}

This paper is the second in a two-paper program on the mispricing of
software costs. Its sibling theorizes \emph{structural waste}: the
recurring, unconditional operational friction of a fragmented system
arrangement: cost that accrues per period held, independent of any
future decision, and that nobody chose as an instrument. The
demarcation between the two constructs is temporal and requires no
external reference to state: debt is a \emph{contingent} liability of
a past decision; waste is an \emph{unconditional} present rent. The
present paper exploits that demarcation from the pricing side: it is
precisely the contingency of debt's cost that makes the option pricing
work, and precisely the loss of contingency that defines toxicity. Figure~\ref{fig:quadrants} arranges
the program: cost structure (contingent vs.\ unconditional) against
origination (chosen vs.\ emergent). Strategic debt and structural
waste occupy opposite corners; toxic debt is the chosen instrument
that has degenerated into an unconditional rent; economically, a
self-inflicted structural waste.

\begin{figure}[t]
\centering
\includegraphics[width=0.86\textwidth]{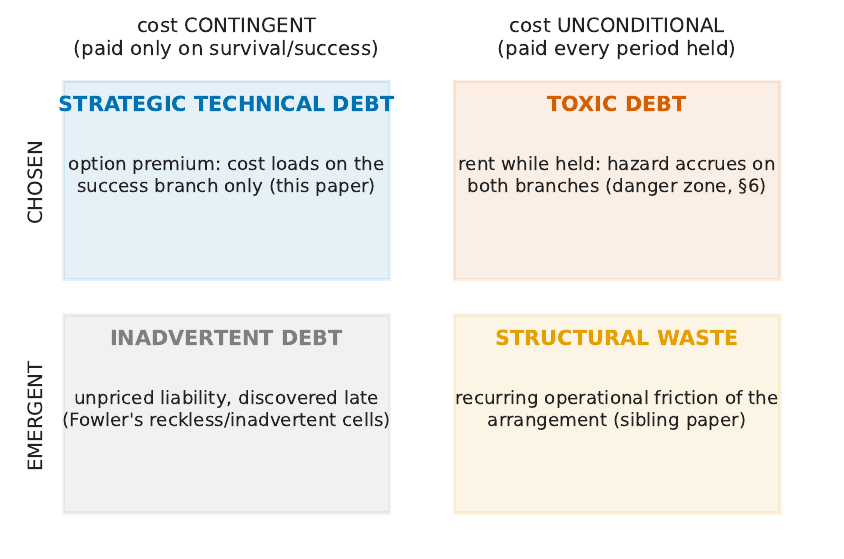}
\caption{The program map. Columns: whether the item's cost is
contingent on survival/success or accrues unconditionally while held.
Rows: whether the item was deliberately chosen or emerged unmanaged.
This paper prices the upper-left cell and locates the danger zone in
the upper-right; the lower-right (unmanaged, unconditional friction) is the structural-waste cell, theorized in the program's sibling
paper.}
\label{fig:quadrants}
\end{figure}

% =====================================================================
\section{Related work}\label{sec:related}

\subsection{Technical debt: from metaphor to measured liability}

\citet{cunningham_1992} introduced debt as a deliberate,
communicable trade-off. The maturation of the field (\citealp{kruchten_2012}'s theory-and-practice agenda, the mapping study
of \citet{li_2015}, the Dagstuhl consolidation
\citealp{avgeriou_2016}) built taxonomies of debt types and a
management lifecycle: identify, measure, prioritize, repay
\citep{guo_seaman_2011,yli_huumo_2016}. The financial vocabulary is
pervasive (principal, interest, portfolios;
\citealp{ampatzoglou_2015}) but, as that review itself documents, it
is used \emph{descriptively}; the instruments are metaphors, not
models. Closest to us formally is \citet{schmid_2013}, whose
expected-cost treatment includes the probability that a debt item's
evolution cost is ever incurred. Our two-period rule
(Proposition~\ref{prop:rule}) contains Schmid's insight as the special
case $\delta = 1, \Delta\Pi = 0$; the sequential model then adds what an
expected-cost calculation cannot express: an exercise \emph{policy}
(when to repay, when to abandon), state-dependent pricing
($\varphi(q)$), and derived timing predictions. Practitioner
treatments that endorse deliberate debt
\citep{mcconnell_2007,fowler_2009,allman_2012,buschmann_2011} supply
the intuition and the taxonomy axes but no payoff structure; our
cost-loading criterion (Definition~\ref{def:loading}) is the formal
version of their prudent/reckless line, relocated from the
decision-maker's mind to the instrument's payoff.

\subsection{Real options in software design}

\citet{sullivan_1999} first framed software design decisions as
staged investments under uncertainty valued as real options;
\citet{erdogmus_favaro_2002} carried the machinery into process
economics, valuing agile flexibility and the deferral principle as
options; \citet{baldwin_clark_2000} showed modularity creates a
portfolio of options on module-level substitution. This line prices
\emph{architecture}, the flexibility good design buys. We price the
opposite instrument: the liability \emph{bad} design incurs, and show
it too has option structure, on the borrower's side. The corporate-
finance machinery we import is standard: contingent-claims valuation
\citep{black_scholes_1973,cox_1979}, investment under uncertainty and
the option value of waiting \citep{dixit_pindyck_1994,
mcdonald_siegel_1986,trigeorgis_1996}, and debt overhang
\citep{myers_1977}. Within software engineering, the options lens has
also been pointed at refactoring itself: \citet{bahsoon_emmerich_2004}
value, in Black--Scholes style, the architectural flexibility a
refactoring \emph{buys}, still pricing flexibility gained, not the
liability incurred. Two clarifications discipline what follows. First,
our machinery is the dynamic-programming tradition of
\citet{dixit_pindyck_1994} (expectations under true probabilities
with an exogenous discount), not market-spanning contingent-claims
pricing, which hypothesis-validation risk (idiosyncratic and
non-traded) does not license. Second, although we call
Proposition~\ref{prop:overhang} an overhang, its mechanism is an
exercise-threshold comparative static, not
\citeauthor{myers_1977}'s multi-claimant agency conflict; the
distinction is drawn explicitly in Section~\ref{sec:sequential}.

\subsection{Entrepreneurship as experimentation}

That early ventures are best understood as sequences of information-
buying experiments is by now the central organizing idea of the
empirical entrepreneurship literature: \citet{kerr_2014} state the
frame; \citet{camuffo_2020} show experimentally that founders trained
to treat decisions as hypothesis tests perform better;
\citet{koning_2022} show A/B-testing infrastructure shifts startup
outcomes; \citet{gans_2019} build strategy on the choice among
experiments; \citet{ries_2011} is the practitioner canon; and the
declining cost of running experiments has itself been formalized as a
driver of venture financing
\citep{nanda_rhodeskropf_2013,ewens_nanda_rhodeskropf_2018}. On the
software side, \citet{giardino_2014,giardino_2016} document the
``greenfield'' pattern (startups deliberately privileging speed over
engineering quality while searching for fit) and \citet{klotins_2018}
confirm deliberate debt accumulation in start-up practice at scale.

Nor is the pairing of real-options reasoning with technical debt
itself new: \citet{abad_ruhe_2015} apply real-options thinking as a
decision-support method for requirements-level debt, and
\citet{alzaghoul_bahsoon_2013,alzaghoul_bahsoon_2014} value the option
to substitute cloud services under a debt constraint. What none of
these, nor \citet{schmid_2013}'s expected-cost model, nor the
architecture-flexibility line running from \citet{sullivan_1999} and
\citet{baldwin_clark_2000} through \citet{bahsoon_emmerich_2004}, provides is a \emph{priced contingent claim}: a treatment in which the
repayment obligation is itself an option with a derived exercise
policy, a state-dependent shadow price below face value, an overhang
effect, and a timing prediction for \emph{when} repayment is optimal
rather than assumed. The clearest near-neighbor on the timing question
is a contemporaneous preprint, \citet{colla_2026}, which argues from
backlog opportunity cost (not a contingent-repayment option) that
immediate remediation is not generally optimal; this paper reaches
that conclusion independently, through a different mechanism, with a
derived rather than assumed decision boundary. The gap this paper
fills is therefore precise: no existing treatment prices technical
debt's repayment as a \emph{contingent claim on venture survival},
with the shadow-price, overhang, timing, and volatility consequences
that follow from that pricing.

% =====================================================================
\section{The construct: strategic technical debt}\label{sec:construct}

\subsection{Definitions}

\begin{definition}[Experiment]\label{def:experiment}
An \emph{experiment} is a built software artifact deployed to obtain a
signal $S$ about a product hypothesis $H$, characterized by its cost
of construction, its signal quality $(\alpha, \beta)$, where $\alpha = \Pr(S{=}\mathrm{pass} \mid H)$,
$\beta = \Pr(S{=}\mathrm{pass} \mid \neg H)$, and the terminal
disposition of the artifact (scaled, salvaged, or discarded).
\end{definition}

\begin{definition}[Strategic technical debt]\label{def:std}
\emph{Strategic technical debt} is an implementation liability
$(C_s, D)$ incurred deliberately in constructing an experiment, where
$C_s$ is the reduced construction cost and $D$ the repayment
(refactoring, re-implementation, hardening) required to convert the
artifact to a scalable asset, such that repayment is
\emph{contingent}: $D$ is paid only on the continuation path.
\end{definition}

\begin{definition}[Cost loading]\label{def:loading}
A debt item's cost is \emph{success-loaded} when its expected cost
conditional on the venture's death is zero (the obligation is
extinguished by abandonment) and \emph{unconditional} when it imposes
expected cost per period held, independent of the venture's eventual
fate. A debt item is \textbf{strategic} only if success-loaded;
an item with a material unconditional component is \textbf{toxic}.
\end{definition}

The operational test for Definition~\ref{def:loading} is a single
question, askable per item at incurrence: \emph{``If the venture died
tomorrow, would any of this item's cost still have been paid?''} A
hardcoded configuration, a monolith that should be services, a
single-tenant assumption: if the venture dies, no one ever pays;
success-loaded; strategic. An unpatched injection vulnerability, an
unbacked-up production database, silent handling of regulated data. These can detonate \emph{while the experiment is running}, on both
branches; their expected cost accrues per period held; toxic. The
demarcation is a property of the payoff structure, not of the
engineer's diligence, which is what makes it falsifiable
(Section~\ref{sec:empirical}, EP5) where prudent/reckless is not.

\begin{definition}[Epistemic debt]\label{def:epistemic}
\emph{Epistemic debt} is a construction shortcut that degrades the
experiment's signal quality (lowering $\alpha$ or raising $\beta$) rather than (or in addition to) creating a repayment liability:
absent instrumentation, unrepresentative hacks in the measured path,
demo instability that contaminates user response.
\end{definition}

Epistemic debt occupies a special position: its cost is neither $C$
nor $D$ but a degradation of the option's \emph{underlying}: the
information the experiment exists to buy. In the model of
Section~\ref{sec:sequential} the value of experimentation is
increasing in signal informativeness, so epistemic debt can carry
negative value even when free. This aligns the construct with the
experimental-discipline results of \citet{camuffo_2020} and
\citet{koning_2022}: the one part of the build a strategic borrower
must \emph{not} shortcut is the measurement apparatus.

\subsection{Demarcation}

\emph{Against Fowler's quadrant} \citep{fowler_2009}: our first axis
(chosen vs.\ emergent) coincides with deliberate/inadvertent; our
second does not coincide with prudent/reckless but formalizes it: prudence becomes success-loading of the cost, a checkable payoff
property.

\emph{Against structural waste} \citep{azarang_2026_sw}: the sibling
construct is the recurring friction of a system arrangement, paid
every period regardless of any rework decision: definitionally
unconditional, definitionally unchosen. The demarcation the sibling
paper draws temporally (liability vs.\ rent) reappears here as the
strategic/toxic boundary \emph{within} debt: toxic debt is debt whose
economics have degenerated to rent. The four-cell map is
Figure~\ref{fig:quadrants}.

% =====================================================================
\section{The two-period model}\label{sec:twoperiod}

\subsection{Setup}

A venture holds hypothesis $H$ with prior $p = \Pr(H)$. It will run
one experiment, observe the outcome, and then continue or abandon.
Values are in units of the venture's cost numeraire; $\delta \in
(0,1]$ discounts the second period.

\begin{assumption}\label{ass:costs}
The experiment can be built debt-laden at cost $C_s$ or robust at
cost $C_r$, with $0 < C_s < C_r$. The debt-laden artifact carries
repayment obligation $D > 0$; the robust artifact carries none.
\end{assumption}

\begin{assumption}\label{ass:signal}
The experiment is decisive: it reveals $H$. (Imperfect signals are
deferred to Section~\ref{sec:sequential}, where they drive the belief
dynamics.)
\end{assumption}

\begin{assumption}\label{ass:payoffs}
If $H$ is validated, continuation is worth $V$ gross (net of scale-up
costs common to both builds), with $V > D$: on the validated branch,
converting the debt-laden artifact is worth doing. If $H$ is refuted,
continuation is worthless, but each artifact has salvage value: components, infrastructure, and subsystems redeployable in the
venture's pivot: $\Pi_r \in [0, C_r)$ for the robust build and
$\Pi_s \in [0, \Pi_r]$ for the debt-laden build. Write
$\Delta\Pi = \Pi_r - \Pi_s \ge 0$ for the salvage differential.
An earlier draft set $\Pi_s = 0$; nothing below needs that, and
debt-laden code does in practice salvage something on a pivot. Every
result depends on salvage only through $\Delta\Pi$.
\end{assumption}

\begin{assumption}\label{ass:risk}
The venture is risk-neutral. (Founder risk preference and financing
frictions are discussed in Section~\ref{sec:discussion}.)
\end{assumption}

Figure~\ref{fig:tree} displays the tree. The value of each build
strategy:
\begin{align}
W_s &= -C_s + \delta\left[\, p\,(V - D) + (1-p)\,\Pi_s \,\right], \label{eq:ws}\\
W_r &= -C_r + \delta\left[\, p\,V + (1-p)\,\Pi_r \,\right]. \label{eq:wr}
\end{align}

\begin{figure}[t]
\centering
\includegraphics[width=0.95\textwidth]{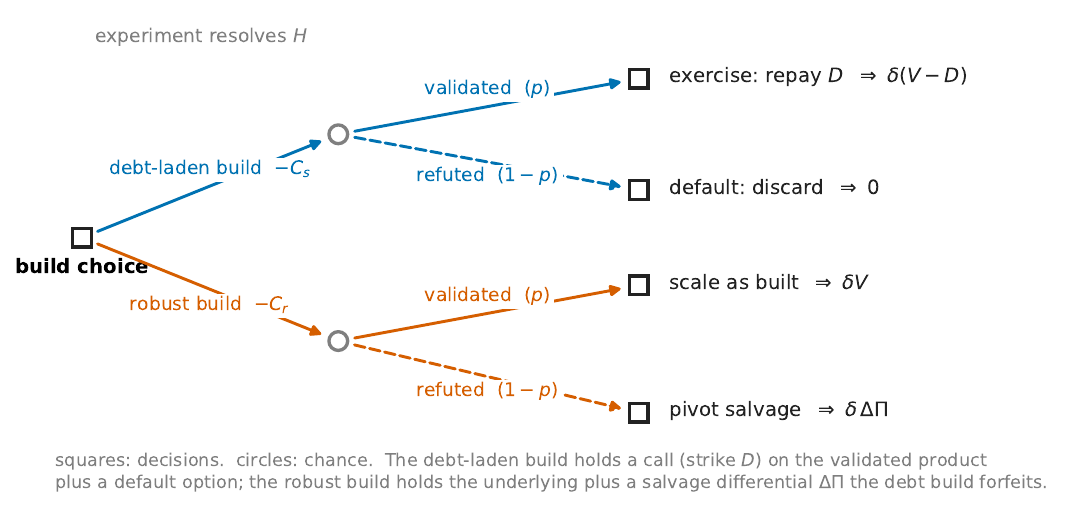}
\caption{The two-period decision tree. The debt-laden build purchases
a call on the validated product (strike $D$) plus a costless default
option; the robust build purchases the underlying plus the salvage
differential $\Delta\Pi$ on the refuted branch (each build salvages
its own $\Pi_i$; only the differential prices the choice). Generated by \texttt{figs.py}.}
\label{fig:tree}
\end{figure}

\subsection{The exercise-adjusted debt rule}

\begin{proposition}[Debt rule]\label{prop:rule}
Under Assumptions~\ref{ass:costs}--\ref{ass:risk}, the debt-laden
build is optimal iff
\begin{equation}\label{eq:rule}
C_r - C_s \;>\; \delta\left[\, p\,D \;+\; (1-p)\,\Delta\Pi \,\right].
\end{equation}
\end{proposition}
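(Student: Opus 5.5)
The plan is a direct comparison of the two closed-form values \eqref{eq:ws} and \eqref{eq:wr}. Under Assumption~\ref{ass:risk} the venture ranks builds by expected discounted value, so ``the debt-laden build is optimal'' means $W_s > W_r$, with ties broken toward the robust build. The strict inequality in \eqref{eq:rule} matches that convention.

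The first thing I will check is that \eqref{eq:ws} and \eqref{eq:wr} correctly encode the continuation decisions the venture would make after the signal; the subtraction itself is routine once this holds. By Assumption~\ref{ass:signal} the experiment reveals $H$, so the venture faces two branches.
\begin{itemize}
\item \emph{Validated branch.} The debt-laden venture either repays $D$ and obtains $V$, or abandons. Because $V > D$ (Assumption~\ref{ass:payoffs}), repaying dominates abandoning for it, so its value is $V-D$. It would also dominate taking the salvage $\Pi_s$ provided $V - D \ge \Pi_s$. I will note that this is implicit in the specification, since salvage is defined only on the refuted branch. The robust venture obtains $V$ on this branch.
\item \emph{Refuted branch.} Continuation is worthless, and the debt obligation is extinguished by abandonment, by Definition~\ref{def:std}. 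Each build therefore collects its own salvage $\Pi_i$.
\end{itemize}
Discounting these branch values by $\delta$ and subtracting the build costs reproduces \eqref{eq:ws} and \eqref{eq:wr}.

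The algebra is then short:
\[
W_s - W_r = (C_r - C_s) + \delta\left[\, p\,(V - D - V) + (1-p)(\Pi_s - \Pi_r) \,\right]
= (C_r - C_s) - \delta\left[\, pD + (1-p)\,\Delta\Pi \,\right].
\]
The common term $\delta p V$ cancels, which is the formal sense in which the rule does not depend on $V$. Both salvage values enter only through $\Delta\Pi$, consistent with the remark in Assumption~\ref{ass:payoffs}. Hence $W_s > W_r$ holds iff \eqref{eq:rule} holds, and both directions of the ``iff'' come from the same identity.

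The only real obstacle is the continuation-optimality step, specifically the validated-branch comparison for the debt build; everything else is substitution. If the authors intended abandonment-for-salvage to be available on the validated branch, then I would state the needed condition $V - D \ge \Pi_s$ explicitly. That condition follows, for instance, from $V > D + \Pi_r$, or it can simply be taken as part of ``converting is worth doing.'' I would close with two sanity checks. Setting $\delta = 1$ and $\Delta\Pi = 0$ recovers the naive rule $C_s + pD < C_r$. As $p \to 0$, the advantage $W_s - W_r$ tends to $(C_r - C_s) - \delta\Delta\Pi$, as previewed in the introduction.
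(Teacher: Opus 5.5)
Your proof is correct and follows the paper's argument, which is the one-line identity $W_s - W_r = (C_r - C_s) - \delta pD - \delta(1-p)\Delta\Pi$ obtained by the same subtraction you perform. Your extra checks, that \eqref{eq:ws}--\eqref{eq:wr} encode optimal post-signal continuation and that ties go to the robust build, are harmless but not needed, since the paper takes those expressions as the definition of each build's value.
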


\begin{proof}
Immediate from $W_s - W_r =
(C_r - C_s) - \delta pD - \delta(1-p)\Delta\Pi$.
\end{proof}

The left side is the premium robustness charges today; the right side
is what debt costs in expectation: the repayment $D$, priced not at
face value but at $\delta p$ (the two-period shadow price of debt) \emph{plus} the salvage differential $\Delta\Pi$ the debt build
forfeits, priced at $\delta(1-p)$. Three corollaries organize the paper's disagreements
with the folk argument.

\begin{corollary}[The naive rule as a special case]\label{cor:naive}
With $\delta = 1$ and $\Delta\Pi = 0$, \eqref{eq:rule} reduces to
$C_s + pD < C_r$, the expected-cost comparison of the informal
argument and, in essence, of \citet{schmid_2013}. The naive rule
systematically overstates the case for debt whenever the venture's
failure mode is a pivot rather than a death.
\end{corollary}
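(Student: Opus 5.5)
The corollary follows from Proposition~\ref{prop:rule} by substituting the stated parameter values and then running a comparative static on $\Delta\Pi$. I would handle the two sentences of the statement separately.

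For the reduction, set $\delta = 1$ and $\Delta\Pi = 0$ in \eqref{eq:rule}. The salvage term $(1-p)\Delta\Pi$ drops out, leaving $C_r - C_s > pD$, which rearranges to $C_s + pD < C_r$. I would also record the interpretation. Under Assumption~\ref{ass:payoffs}, $\Delta\Pi = 0$ means $\Pi_r = \Pi_s$, so the two builds salvage equally on the refuted branch. With $\delta = 1$ there is no discounting. The comparison is then exactly between the debt build's expected outlay $C_s + pD$ and the robust outlay $C_r$. This is the expected-cost structure attributed to \citet{schmid_2013}, with $p$ playing the role of the probability that repayment is ever incurred. The citation is interpretive, so no formal claim needs proof there.

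For the overstatement claim, I would first make ``overstates'' precise. Define the naive advantage $A_{\mathrm{naive}} = (C_r - C_s) - \delta p D$. This is the rule applied with $\Delta\Pi$ set to zero, which is also the relevant comparison when $\delta < 1$. The true advantage is $W_s - W_r = A_{\mathrm{naive}} - \delta(1-p)\Delta\Pi$, taken from the proof of Proposition~\ref{prop:rule}. The gap is $\delta(1-p)\Delta\Pi$. This is nonnegative, and it is strictly positive whenever $\delta > 0$, $p < 1$ and $\Delta\Pi > 0$.

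Next I would translate ``the failure mode is a pivot rather than a death'' into a parameter condition. Death means neither artifact has redeployable value, so $\Pi_r = \Pi_s = 0$ and $\Delta\Pi = 0$. A pivot means the robust components carry over better than the debt-laden ones, so $\Delta\Pi > 0$. Under that condition the naive advantage strictly exceeds the true advantage. Every parameter configuration the naive rule approves as favoring debt is therefore approved with too large a margin. There is also a nonempty region where the naive rule approves debt but \eqref{eq:rule} rejects it: $\delta p D < C_r - C_s \le \delta[pD + (1-p)\Delta\Pi]$. This region is what justifies the word ``systematically'': the bias always has the same sign.

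The only real obstacle is definitional. The corollary's word ``overstates'' and its phrase ``pivot versus death'' have to be mapped onto $\Delta\Pi > 0$ and onto the sign of the gap. Once that mapping is fixed, both parts are one-line algebra from Proposition~\ref{prop:rule}.
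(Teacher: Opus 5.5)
Your proposal is correct and matches the paper, where the corollary is read off the identity $W_s - W_r = (C_r - C_s) - \delta pD - \delta(1-p)\Delta\Pi$ from the proof of Proposition~\ref{prop:rule}: substitution gives the reduction, and the sign of the omitted salvage term $\delta(1-p)\Delta\Pi$, with pivot versus death mapped to $\Delta\Pi > 0$ versus $\Delta\Pi = 0$ as in Corollary~\ref{cor:kill}, gives the overstatement claim. Holding $\delta$ fixed in your naive benchmark is necessary when $\delta < 1$, because the literal rule $C_s + pD < C_r$ also overcharges repayment by $(1-\delta)pD$, so its net bias $\delta(1-p)\Delta\Pi - (1-\delta)pD$ can be negative and the claim holds unconditionally only at $\delta = 1$ or under your same-$\delta$ reading.
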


\begin{corollary}[Kill vs.\ steer]\label{cor:kill}
As $p \to 0$, $W_s - W_r \to (C_r - C_s) - \delta\Delta\Pi$. Debt is
dominated at \emph{all} repayment levels for all
$p < \bar p \equiv 1 - \frac{C_r - C_s}{\delta\,\Delta\Pi}$ whenever
$\Delta\Pi > (C_r - C_s)/\delta$. Maximum uncertainty argues for maximum
debt only in \emph{experiments run to kill}, where refutation
terminates the line of work and salvage is nil. In \emph{experiments
run to steer} (where refutation redirects the venture and robust
components survive the pivot at a higher rate than debt-laden ones), robustness self-finances through its salvage \emph{differential}, and
the folk rule inverts. The inversion requires only $\Delta\Pi$ large,
not $\Pi_s = 0$: granting the debt build its own salvage weakens
nothing. Feasibility under the payoff bounds ($\Delta\Pi \le \Pi_r <
C_r$) requires $C_s > C_r(1-\delta)$, mild at realistic discounting but a genuine domain restriction: the inversion cannot occur when the
robust build costs nearly nothing to replace.
\end{corollary}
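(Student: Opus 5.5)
The argument works entirely from the closed form in the proof of Proposition~\ref{prop:rule},
\[
g(p) \equiv W_s - W_r = (C_r - C_s) - \delta p D - \delta(1-p)\Delta\Pi ,
\]
which is affine in $p$.

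\textbf{Limit.} Set $p=0$, or let $p\to 0$ by continuity. This gives $g(0) = (C_r - C_s) - \delta\Delta\Pi$, which proves the first claim directly.

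\textbf{Dominance below $\bar p$.} I will use a bound that does not depend on $D$. Since $D>0$ and $\delta>0$, we have $g(p) \le (C_r - C_s) - \delta(1-p)\Delta\Pi$ for every admissible $D$. The right-hand side is at most $0$ exactly when $(1-p) \ge (C_r-C_s)/(\delta\Delta\Pi)$, that is, when $p \le \bar p$. The hypothesis $\Delta\Pi > (C_r - C_s)/\delta$ gives $\bar p \in (0,1)$, so the range is nonempty. For $p<\bar p$ the bound is strict, and so $W_s < W_r$ for all $D>0$. This is the sense of ``at all repayment levels.''

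I should also check that $\bar p$ is sharp at the boundary $D\to 0$. For $p>\bar p$, a small enough $D$ makes $g(p)>0$, so $\bar p$ is the exact uniform threshold and not merely a sufficient one. The same computation covers the kill/steer reading. With $\Delta\Pi = 0$ (kill), $g(p)\to C_r - C_s>0$ as $p\to0$, so debt wins. With $\Delta\Pi$ large (steer), the inequality reverses. Because $\Pi_s$ enters only through $\Delta\Pi$, allowing $\Pi_s>0$ changes nothing.

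\textbf{Feasibility.} This is the only step needing care, since it combines the hypothesis with the payoff bounds of Assumption~\ref{ass:payoffs}. From $\Delta\Pi \le \Pi_r < C_r$ and $\Delta\Pi > (C_r-C_s)/\delta$ we get $(C_r - C_s)/\delta < C_r$, which rearranges to $C_s > C_r(1-\delta)$. For the converse, given $C_s > C_r(1-\delta)$, I will exhibit admissible parameters: choose $\Pi_r = \Delta\Pi$ with $\Pi_s=0$, taking $\Pi_r \in \bigl((C_r-C_s)/\delta,\, C_r\bigr)$. This interval is nonempty by the condition. So the restriction is both necessary and sufficient for the inversion to be realizable, and it is vacuous when $\delta=1$.
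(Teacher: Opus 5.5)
Your proposal is correct and takes the route the paper intends: the corollary is read off the closed form $W_s - W_r = (C_r - C_s) - \delta pD - \delta(1-p)\Delta\Pi$ from the proof of Proposition~\ref{prop:rule}. You drop the nonnegative $\delta pD$ term to get the bound uniform in $D$, and you chain $(C_r - C_s)/\delta < \Delta\Pi \le \Pi_r < C_r$ to get feasibility. Two of your steps go beyond what the paper spells out, and both are valid: the sharpness check as $D \to 0$, and the converse construction with $\Pi_r = \Delta\Pi$ and $\Pi_s = 0$.
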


\begin{corollary}[Indifference belief]\label{cor:pbar}
For $D > \Delta\Pi$, debt is optimal iff
$p < p^\ast = \frac{(C_r - C_s)/\delta \;-\; \Delta\Pi}{D - \Delta\Pi}$:
the strategy is a low-belief instrument, abandoned in favor of
robustness as evidence accumulates: a first, static glimpse of the
repayment dynamics formalized in Proposition~\ref{prop:pivot}.
\end{corollary}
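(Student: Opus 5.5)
The plan is to start from the identity in the proof of Proposition~\ref{prop:rule}, $W_s - W_r = (C_r - C_s) - \delta p D - \delta(1-p)\Delta\Pi$. It is affine in $p$, so rearrange it into the form $\delta\bigl[(C_r-C_s)/\delta - \Delta\Pi - p(D - \Delta\Pi)\bigr]$. Collecting the $p$-terms gives $-\delta p D + \delta p \Delta\Pi = -\delta p (D-\Delta\Pi)$, and the constant is $(C_r-C_s) - \delta\Delta\Pi$. Since $\delta>0$, debt is optimal, meaning $W_s - W_r > 0$, iff $p(D-\Delta\Pi) < (C_r-C_s)/\delta - \Delta\Pi$.

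Under the hypothesis $D > \Delta\Pi$ we can divide by the positive slope coefficient $D-\Delta\Pi$ without flipping the inequality. This yields $p < p^\ast$ with $p^\ast$ exactly as stated. The sign condition is the only delicate point. If $D \le \Delta\Pi$, the advantage would be weakly increasing in $p$ and the characterization would reverse or degenerate. That is why the corollary restricts to $D>\Delta\Pi$, and I will note this explicitly.

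For the interpretive clause, that debt is a ``low-belief instrument,'' I will observe that $W_s - W_r$ is strictly decreasing in $p$ with slope $-\delta(D-\Delta\Pi)<0$. The set of beliefs favoring debt is therefore a lower interval $[0,p^\ast)\cap[0,1]$. I will also remark on the boundary cases. If $p^\ast \le 0$, which is the kill-vs.-steer inversion of Corollary~\ref{cor:kill}, debt is never optimal. If $p^\ast \ge 1$, debt is optimal at every belief.

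There is no real obstacle here beyond keeping track of the strictness of the inequality and the sign of $D-\Delta\Pi$. The link to Proposition~\ref{prop:pivot} is interpretive rather than part of the proof: rising belief $p$ crosses $p^\ast$ from below and flips the optimal build.
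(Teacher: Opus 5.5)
Your argument is correct and follows the paper's route. The paper gives no separate proof because the corollary follows directly from the identity $W_s - W_r = (C_r - C_s) - \delta pD - \delta(1-p)\Delta\Pi$ in the proof of Proposition~\ref{prop:rule}; that identity is affine in $p$ with slope $-\delta(D-\Delta\Pi)<0$, and you solve it for $p$ exactly as intended. One small slip in your boundary remark: when $p^\ast = 1$ exactly, $p=1$ gives indifference rather than strict optimality, so in that case debt is optimal only at every $p<1$.
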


Figure~\ref{fig:regions} plots the boundary \eqref{eq:rule} in the
$(p, D)$ plane for salvage-free and salvage-rich experiments; the
qualitative lesson is that the region where debt is optimal is carved
away from the \emph{low}-$p$ side by $\Delta\Pi$, not the high-$p$ side.

\begin{figure}[t]
\centering
\includegraphics[width=0.9\textwidth]{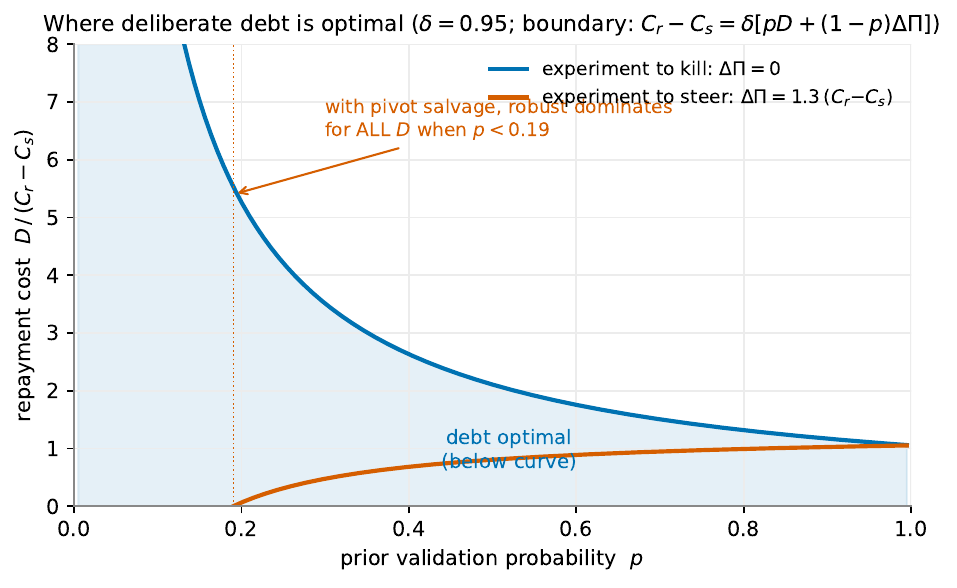}
\caption{The region where the debt-laden build is optimal
(Proposition~\ref{prop:rule}), for an experiment run to kill
($\Delta\Pi = 0$) and an experiment run to steer
($\Delta\Pi = 1.3\,(C_r - C_s)$). Pivot salvage removes the low-$p$ corner
entirely: below $p \approx 0.19$ robustness dominates at every
repayment level. Admissibility: $\Delta\Pi \le \Pi_r < C_r$ caps the
steer curve at cost ratios $C_s/C_r > 1 - 1/1.3 \approx 0.23$; the
diagram therefore illustrates a higher-salvage cost regime than the
dynamic calibration of \S5 ($c_s/c_r = 0.15$), under which the
admissible differential is capped at $\Delta\Pi < 1.18\,(C_r - C_s)$.
Generated by \texttt{figs.py}.}
\label{fig:regions}
\end{figure}

\begin{remark}[Repayment is capped by rebuilding]\label{rem:cap}
$D$ never exceeds the cost of rebuilding robustly with validated
knowledge: effective repayment is $D_{\mathrm{eff}} = \min(D,
C_r^{\mathrm{rebuild}})$, and $C_r^{\mathrm{rebuild}} \le C_r$ because
the rebuild inherits a validated specification. The much-feared
``rewrite from scratch'' is, in this frame, the exercise of a
\emph{ceiling} on the option's strike, an argument that the
worst case of strategic debt is better than its reputation, provided
the debt is genuinely disposable. The cap is, however, a
\emph{pre-launch} result: once the artifact carries live traffic, a
rebuild inherits data migration and undocumented load-bearing
behavior, and $C_r^{\mathrm{rebuild}}$ can exceed $C_r$; and
Proposition~\ref{prop:pivot} locates repayment exactly at the
commitment boundary, where traffic begins. We flag the tension rather
than hide it.
\end{remark}

\subsection{Volatility prices the debt option up}

\begin{proposition}[Volatility]\label{prop:vega}
Let the validated continuation value be a random variable $\tilde V$
with $\mathbb{E}[\tilde V] = V$, realized after validation but before
the repayment decision, and extend \eqref{eq:ws}--\eqref{eq:wr} with
$W_s = -C_s + \delta p\, \mathbb{E}[\max(\tilde V - D, 0)]$ and
$W_r = -C_r + \delta p\,\mathbb{E}[\tilde V] + \delta(1-p)\Pi_r$, and
$W_s$ correspondingly carries $\Pi_s$. Then,
under risk-neutral valuation (Assumption~\ref{ass:risk}), a
mean-preserving spread of $\tilde V$ weakly increases $W_s - W_r$,
strictly if the spread transfers mass across the kink at $D$.
\end{proposition}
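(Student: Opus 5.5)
I would prove Proposition~\ref{prop:vega} by writing the difference in closed form and reducing it to the convexity of the call payoff. With the extended values,
\[
W_s - W_r = (C_r - C_s) - \delta(1-p)\Delta\Pi - \delta p\,\mathbb{E}\bigl[\tilde V - \max(\tilde V - D, 0)\bigr]
= (C_r - C_s) - \delta(1-p)\Delta\Pi - \delta p\,\mathbb{E}\bigl[\min(\tilde V, D)\bigr].
\]
Everything except the last term is invariant to the distribution of $\tilde V$. Under a mean-preserving spread, $\mathbb{E}[\tilde V] = V$ is held fixed, so the robust build's term $\delta p\,\mathbb{E}[\tilde V]$ does not move. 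The comparison therefore reduces to showing that a mean-preserving spread weakly lowers $\mathbb{E}[\min(\tilde V, D)]$, or equivalently weakly raises $\mathbb{E}[\max(\tilde V - D,0)]$. The max-form is the ``call vs.\ underlying'' reading given in the text.

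The key step is the standard Rothschild--Stiglitz characterization. Suppose $\tilde V'$ is a mean-preserving spread of $\tilde V$, written as $\tilde V' \overset{d}{=} \tilde V + \varepsilon$ with $\mathbb{E}[\varepsilon \mid \tilde V] = 0$. Then $\mathbb{E}[g(\tilde V')] \ge \mathbb{E}[g(\tilde V)]$ for every convex $g$. The proof is conditional Jensen, $\mathbb{E}[g(\tilde V+\varepsilon)\mid \tilde V] \ge g(\tilde V)$, followed by taking expectations.

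Applying this to $g(v) = \max(v - D, 0)$, which is convex, gives the weak inequality. Because $p \ge 0$ and $\delta > 0$, the factor $\delta p$ preserves the sign, so $W_s - W_r$ weakly increases. I would add a one-line remark that the possible default at $\tilde V < D$ is exactly what makes the payoff convex rather than linear. This is also where Assumption~\ref{ass:payoffs}'s $V > D$ becomes irrelevant, since only $\mathbb{E}\tilde V$ is fixed and realized values may fall below $D$. For the $\Pi_s$ bookkeeping I would simply note that it enters both the original and the spread distribution identically.

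The main obstacle is the strictness clause, because ``transfers mass across the kink'' has to be made precise. I would formalize it as $\Pr\bigl(\tilde V < D < \tilde V + \varepsilon \text{ or } \tilde V + \varepsilon < D < \tilde V\bigr) > 0$, more cleanly stated as: on a set of positive probability, the conditional law of $\tilde V + \varepsilon$ given $\tilde V$ puts positive mass strictly on both sides of $D$.

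On that set, conditional Jensen holds strictly for $g$, because $g$ is affine on each side of $D$ but not across it. Specifically, $\mathbb{E}[g(\tilde V+\varepsilon)\mid\tilde V] - g(\tilde V)$ equals the conditional expectation of the distance of $\tilde V+\varepsilon$ beyond $D$ on the side opposite $\tilde V$. That quantity is positive when mass crosses, and it needs slight care when $\tilde V = D$ exactly, where any nondegenerate $\varepsilon$ suffices.

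Integrating this gives a strict inequality, provided also $p > 0$. I would state $p>0$ as an implicit nondegeneracy condition, since at $p = 0$ the difference is constant. Where no mass crosses, $g$ is affine on the support of each conditional law and equality holds, which shows the condition is sharp.
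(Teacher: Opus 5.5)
Your proposal is correct and follows the paper's own proof: the mean $\mathbb{E}[\tilde V]$ is unchanged by a mean-preserving spread, and the call payoff $\max(\tilde V - D, 0)$ is convex, so its expectation weakly rises, strictly when mass crosses the kink at $D$. Your conditional-Jensen computation of the gap (the expected excursion past $D$ on the side opposite $\tilde V$) proves what the paper only asserts by citation, namely that kink-crossing is both necessary and sufficient for strictness, and your nondegeneracy caveat $p>0$ is a small condition the paper leaves implicit.
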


\begin{proof}
$\mathbb{E}[\tilde V]$ is invariant under mean-preserving spreads;
$\mathbb{E}[\max(\tilde V - D, 0)]$ is the expectation of a convex
function and weakly increases, strictly when the spread moves mass
across the kink at $D$ \citep[cf.][]{cox_1979}. (Mass added below $D$
that does not cross the kink leaves the call value unchanged; the
kink-crossing condition is both necessary and sufficient for
strictness.)
\end{proof}

\begin{remark}[Valuation, not behavior]\label{rem:riskneutral}
Proposition~\ref{prop:vega} is the result most sensitive to
Assumption~\ref{ass:risk}, and the statement scopes it deliberately:
it prices the instrument under risk neutrality. For a risk-averse
founder a mean-preserving spread raises the call's price \emph{and}
its risk, and numerically (CARA utility over the same payoffs) the
expected-utility comparison can invert at every plausible
risk-aversion level. The proposition therefore predicts the behavior
of diversified capital, not of an undiversified founder; the
cross-sectional prediction EP4 conditions on financing structure
accordingly.
\end{remark}

The debt build holds a call; the robust build holds the underlying.
Volatility in what a validated product would be worth (the defining condition of new markets) is therefore not merely a circumstance in
which debt is \emph{excusable}; it is priced \emph{into} the debt
option's favor. This is the formal content of the intuition that
``hacky'' building belongs to high-variance opportunity spaces, and
it yields the cross-sectional prediction EP4.

% =====================================================================
\section{The sequential model}\label{sec:sequential}

The two-period model prices one experiment. Real early-stage work is a
\emph{sequence}: each experiment updates belief, each build decision
adds or avoids debt, and the venture chooses continually among
experimenting cheaply, experimenting robustly, repaying, committing,
and abandoning against a finite runway. This section develops the
resulting dynamic program; it is the paper's core.

\subsection{Primitives}

Time $t = 0, 1, \dots, T$ (runway of $T$ experiments, the option's maturity). State: belief $q_t = \Pr(H \mid \text{history})$ and debt
stock $D_t$. Each period the venture chooses one action:
\begin{itemize}[itemsep=1pt]
\item \textbf{abandon}: value $0$ (all debt extinguished);
\item \textbf{commit}: pay scale-up cost $K$ and repay $D_t$, receive
      $q_t V$: terminal value $q_t V - K - D_t$;
\item \textbf{experiment cheap}: pay $c_s$, debt grows to
      $D_t + d$, observe $S \in \{\mathrm{pass}, \mathrm{fail}\}$
      with $\Pr(\mathrm{pass}) = q_t\alpha + (1-q_t)\beta$, update
      $q_{t+1}$ by Bayes;
\item \textbf{experiment robust}: pay $c_r > c_s$, debt unchanged,
      same signal and update.
\end{itemize}
The Bellman equation, with $\tau = T - t$ experiments remaining:
\begin{equation}\label{eq:bellman}
J_\tau(q, D) = \max\!\Big\{\,0,\;\; qV - K - D,\;\;
 -c_s + \delta\,\mathbb{E}\big[J_{\tau-1}(q', D + d)\big],\;\;
 -c_r + \delta\,\mathbb{E}\big[J_{\tau-1}(q', D)\big] \Big\},
\end{equation}
with $J_0(q, D) = \max(0,\, qV - K - D)$. A separate \emph{repay}
action (pay $x \le D$ mid-course) is analyzed in
Proposition~\ref{prop:pivot}; carrying costs (interest and toxicity)
enter in Section~\ref{sec:danger}.

\begin{lemma}[Shape]\label{lem:shape}
$J_\tau(q, D)$ is convex and nondecreasing in $q$, nonincreasing in
$D$, and satisfies the \emph{deferral bound}
\begin{equation}\label{eq:lipschitz}
J_\tau(q, D) - J_\tau(q, D + x) \;\le\; \varphi_\tau(q, D)\, x
\;\le\; x
\quad\text{for } x > 0,
\end{equation}
where $\varphi_\tau(q, D) =
\mathbb{E}\big[\delta^{\tau_c}\,\mathbf{1}\{\text{commit}\}\big]$
is the expected discounted indicator of ever committing under the
optimal policy from $(q, D)$. (Proof sketch:
Appendix~\ref{app:proofs}.)
\end{lemma}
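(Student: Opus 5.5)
Induction on $\tau$ will establish all three shape properties together, using the Bellman recursion \eqref{eq:bellman}. The base case is immediate: $J_0(q,D)=\max(0,qV-K-D)$ is a maximum of affine functions of $q$. It is therefore convex and nondecreasing in $q$ (since $V>0$) and nonincreasing in $D$. For the inductive step I will use a belief-martingale representation of the continuation term. Write $\Pr(\mathrm{pass})\,J_{\tau-1}(q^+,D)+\Pr(\mathrm{fail})\,J_{\tau-1}(q^-,D)$, with $q^\pm$ the Bayes posteriors. The standard device is to homogenize: set $\hat J(\pi_1,\pi_0)=(\pi_1+\pi_0)\,J(\pi_1/(\pi_1+\pi_0))$, the perspective of a convex function, which is jointly convex and positively homogeneous. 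The expectation then equals $\hat J(q\alpha,(1-q)\beta)+\hat J(q(1-\alpha),(1-q)(1-\beta))$. This is a sum of convex functions composed with maps that are linear in $q$, so it is convex in $q$.

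Monotonicity in $q$ follows because the posteriors $q^\pm$ are nondecreasing in $q$. Convexity is not quite enough for this, since the mixing weights also move with $q$. Instead I will use a two-step argument. First, a higher prior produces a posterior distribution that first-order dominates, because the pass probability $q\alpha+(1-q)\beta$ rises in $q$ whenever $\alpha\ge\beta$. Second, $J_{\tau-1}$ is nondecreasing. I will state $\alpha\ge\beta$ as the standing informativeness convention. Monotonicity in $D$ is direct: every branch of \eqref{eq:bellman} is nonincreasing in $D$ by the inductive hypothesis, and the max preserves this. Convexity and monotonicity are both preserved under $\max$ with the affine terms $0$ and $qV-K-D$.

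The deferral bound is the substantive part, and I expect it to be the main obstacle. The plan is a coupling (policy-imitation) argument rather than a pure induction. Fix the optimal policy $\sigma^\ast$ from the higher-debt state $(q,D+x)$, and let the venture at $(q,D)$ imitate it path by path. The two trajectories see the same signals, so they share beliefs and stopping times, and their debt stocks differ by exactly $x$ until termination. On paths ending in abandonment, both receive $0$, and the debt is extinguished. On paths ending in commitment at $\tau_c$, the lower-debt venture gains exactly $\delta^{\tau_c}x$. Hence
\[
J_\tau(q,D)\;\ge\;J_\tau(q,D+x)+x\,\mathbb{E}_{\sigma^\ast}\big[\delta^{\tau_c}\mathbf{1}\{\text{commit}\}\big].
\]

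This gives a \emph{lower} bound on the difference, with the discounted commit probability computed under the policy optimal at $D+x$. To get the stated upper bound, I will run the imitation the other way. The venture at $(q,D+x)$ copies the policy optimal from $(q,D)$, which yields $J_\tau(q,D)-J_\tau(q,D+x)\le x\,\varphi_\tau(q,D)$ with $\varphi$ evaluated under the optimal policy from $(q,D)$, exactly as in \eqref{eq:lipschitz}.

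The point to check carefully is that the copied policy is feasible at the higher debt. It is: the action set does not depend on $D$, and the only $D$-dependence is the terminal commit payoff, which is lower by exactly $\delta^{\tau_c}x$. A copied policy may therefore be suboptimal but never infeasible, so $J_\tau(q,D+x)\ge J_\tau(q,D)-x\varphi_\tau(q,D)$. The final inequality $\varphi_\tau\le 1$ is trivial, since $\delta\le1$ and the indicator is at most $1$.

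One caveat I will flag: optimal policies may not be unique at ties. In that case I will fix a measurable tie-breaking selection, for example preferring abandon over commit, and define $\varphi_\tau$ relative to it. The bound holds for any optimal selection, because the copying argument never uses uniqueness.
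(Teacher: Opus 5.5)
Your proposal is correct and follows the paper's own route. The shape properties come from backward induction: the perspective-function decomposition of the Bayes operator gives convexity, and the first-order-dominance argument gives monotonicity in $q$, both of which the paper spells out in its appendix. The deferral bound comes from a policy-replication coupling in which the venture at $(q, D+x)$ copies the optimal policy from $(q, D)$ and pays the extra $x$ only on the discounted commit event.

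Your initial reverse coupling is not needed for the stated bound, but it is valid. It gives the complementary inequality $J_\tau(q,D) - J_\tau(q,D+x) \ge x\,\varphi_\tau(q,D+x)$, which the paper does not state.
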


\begin{definition}[Shadow price of debt]\label{def:phi}
$\varphi_\tau(q, D) \in [0, 1]$ is the \emph{shadow price of
technical debt}: the present cost of one additional unit of promised
repayment. In the two-period model $\varphi = \delta p$; in general it
is the risk-discounted probability that the venture survives to the
repayment boundary.
\end{definition}

$\varphi$ is the workhorse quantity of the analysis. As a mathematical
object it is standard, and we name the lineage rather than claim it:
the expected discounted indicator of a stopping event is a \emph{state price}: the present value of one unit paid at the random commitment time, the same reduced-form machinery that prices defaultable
claims. (For a contemporaneous shadow-pricing derivation in an exploration-and-stopping program, see \citealp{sannikov_zhong_2026}.) The contribution is not the object but what its engineering
comparative statics say (Propositions~\ref{prop:overhang}
and~\ref{prop:pivot}). It is small exactly when
the venture is uncertain: Figure~\ref{fig:dp}B shows
$\varphi(q{=}0.3) \approx 0.23$ (evaluated, as throughout, as the finite difference $[J(q,0)-J(q,d)]/d$ across the debt increment, the natural shadow price on a discrete debt grid) and $\varphi \to 1$ only inside the commitment region; and it is the correct discount at which an
engineering team should book a promise to refactor. The folk
discourse prices debt at face value (hence the moral vocabulary);
the naive optimists price it at zero (hence the horror stories);
the model prices it at $\varphi$.

\begin{figure}[t]
\centering
\includegraphics[width=\textwidth]{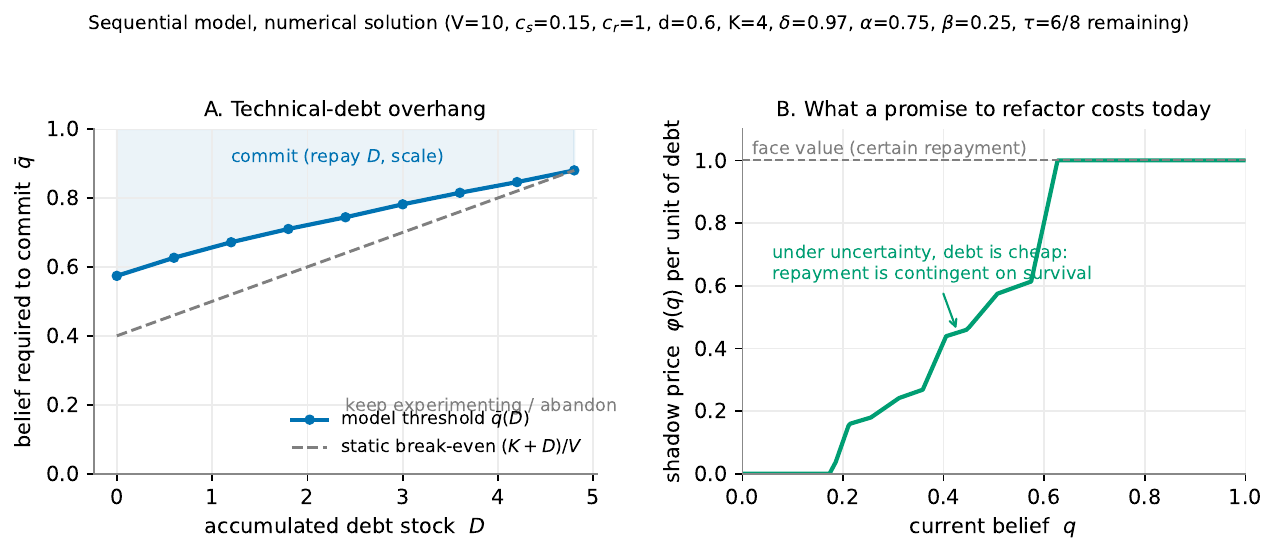}
\caption{Numerical solution of \eqref{eq:bellman} by backward
induction on a belief grid ($n_q = 2001$; parameters in the panel
title; solver committed in \texttt{figs.py}). \textbf{A.}\ The
commitment threshold $\bar q(D)$ rises with the debt stock: the technical-debt overhang (Proposition~\ref{prop:overhang}); the static
break-even $(K + D)/V$ understates the bar because committing also
forfeits the option to keep experimenting. \textbf{B.}\ The shadow
price $\varphi(q)$ of one unit of debt: far below face value under
uncertainty, reaching face value only in the commitment region. The
calibration is illustrative, not estimated.}
\label{fig:dp}
\end{figure}

\subsection{Technical-debt overhang}

\begin{proposition}[Overhang]\label{prop:overhang}
The commitment threshold
$\bar q_\tau(D) = \inf\{q : \text{commit is optimal at } (q, D)\}$
is nondecreasing in $D$, and strictly increasing wherever the
continuation option has positive value. (Proof:
Appendix~\ref{app:proofs}, closed by an \emph{envelope Lipschitz lemma}: the value function's upper slope in $q$ is bounded by $V$,
proved via the affine-invariance of the Bayes operator, which makes
the commit set an upper interval at every $(\tau, D)$. Numerically
corroborated: the threshold is monotone in every configuration of an
independent 91-configuration sensitivity sweep, and in
Figure~\ref{fig:dp}A $\bar q$ rises from $0.574$ at $D = 0$ to
$0.880$ at $D = 4.8$.)
\end{proposition}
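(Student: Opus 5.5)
The plan is to compare the commit payoff with the optimal non-commit payoff. Define the \emph{continuation value}
\[
G_\tau(q,D)=\max\{0,\;-c_s+\delta\mathbb{E}[J_{\tau-1}(q',D+d)],\;-c_r+\delta\mathbb{E}[J_{\tau-1}(q',D)]\},
\]
so that commit is optimal at $(q,D)$ iff $H_\tau(q,D)\equiv qV-K-D-G_\tau(q,D)\ge 0$. The threshold $\bar q_\tau(D)$ is then the lower endpoint of the commit set, and monotonicity reduces to two claims.

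\textbf{Step 1: the commit set is an upper interval in $q$.} This is the envelope Lipschitz lemma. I would prove by backward induction on $\tau$ that $J_\tau(\cdot,D)$ has upper slope at most $V$, i.e.\ $J_\tau(q_2,D)-J_\tau(q_1,D)\le V(q_2-q_1)$ for $q_1<q_2$. The base case $J_0=\max(0,qV-K-D)$ is immediate.

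For the inductive step, write the experiment branches in unnormalized form. Let $\mathbb{E}[J_{\tau-1}(q',\cdot)]=\sum_S P(S\mid q)\,J_{\tau-1}(q'_S,\cdot)$. Because Bayes' rule is a linear-fractional map, $P(S\mid q)$ is affine in $q$, and so is $P(S\mid q)q'_S=q\Pr(S\mid H)$; this is the affine invariance the statement refers to. If $J_{\tau-1}(\cdot)-Vq$ is nonincreasing, then after multiplying by $P(S\mid q)$ the term $P(S\mid q)\,J_{\tau-1}(q'_S)-Vq\Pr(S\mid H)$ is nonincreasing in $q$, using perspective-function arguments together with the convexity from Lemma~\ref{lem:shape}. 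Summing over $S$ gives $\mathbb{E}[J_{\tau-1}]-Vq$ nonincreasing. The maximum of functions of the form (slope $\le V$) together with $0$ and $qV-K-D$ keeps upper slope $\le V$. Hence $G_\tau$ has upper slope $\le\delta V\le V$, so $H_\tau$ is nondecreasing in $q$, and the commit set $\{q:H_\tau\ge0\}$ is an upper interval.

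This step is where I expect the main obstacle: the perspective argument must handle the normalization carefully. If it resists, I would fall back on $\delta<1$ strict monotonicity, or argue directly on the finite piecewise-linear value function.

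\textbf{Step 2: $H_\tau$ is nonincreasing in $D$.} By Lemma~\ref{lem:shape}, $J_{\tau-1}(q,D)-J_{\tau-1}(q,D+x)\le\varphi_{\tau-1}x\le x$. Each experiment branch therefore drops by at most $\delta x$ when $D$ rises by $x$, and the $0$ branch does not drop at all, so $G_\tau(q,D)-G_\tau(q,D+x)\le\delta x\le x$. Since the commit payoff drops by exactly $x$, we get $H_\tau(q,D+x)\le H_\tau(q,D)$. Combined with Step 1, the commit set at $D+x$ is contained in the commit set at $D$, so $\bar q_\tau(D+x)\ge\bar q_\tau(D)$.

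\textbf{Step 3: strictness.} Where the continuation option has positive value, $G_\tau$ is attained by an experiment branch, and the drop bound becomes $\delta\varphi_{\tau-1}x$. This is strictly below $x$ because either $\delta<1$ or $\varphi_{\tau-1}<1$, the latter since abandonment occurs with positive probability off the commit region. Then $H_\tau$ strictly decreases in $D$. At $\bar q_\tau(D)$ we have $H_\tau=0$, so after increasing $D$ we have $H_\tau<0$ there, and $\bar q$ must strictly rise. The upper-slope bound from Step 1 prevents the threshold from staying put because of flat stretches.
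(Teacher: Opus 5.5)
\textbf{Gap in Step~1.} The per-signal claim in Step~1 is false. You assert that for each signal $S$ the term $P(S\mid q)\,J_{\tau-1}(q'_S)-Vq\Pr(S\mid H)=P(S\mid q)\,g(q'_S)$, with $g(x)=J_{\tau-1}(x)-Vx$ nonincreasing, is itself nonincreasing in $q$. This already fails at $\tau=1$. Take $J_0(x,D')=\max(0,\,xV-K-D')$ with $0<K+D'<V$, a signal with $\beta<\alpha<1$, and $q$ close enough to $1$ that $q'_{\mathrm{fail}}\ge (K+D')/V$. The fail term is then exactly $-(K+D')\,P(\mathrm{fail}\mid q)$, which is strictly \emph{increasing} in $q$ because $P(\mathrm{fail}\mid q)$ falls at rate $\alpha-\beta$. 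The pass term is $-(K+D')\,P(\mathrm{pass}\mid q)$ and falls at the same rate, so only the sum is monotone (here it is constant). Since $J_0$ is convex, no perspective or convexity argument applied signal by signal can rescue the claim: the slope bound comes from cancellation across signals.

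\textbf{The missing idea.} Follow the paper and sum over signals first. The martingale identity $\sum_S P(S\mid q)\,q'_S=q$ (the Bayes operator maps affine functions to themselves; this is the ``affine invariance'' of the statement) gives $\mathbb{E}[J_{\tau-1}(q')]-Vq=\mathbb{E}[g(q')]$. The two-point posterior law is first-order stochastically nondecreasing in the prior: both posteriors rise with $q$, and so does the weight on the higher one. Hence the expectation of the nonincreasing $g$ is nonincreasing.

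A shortcut also works. The function $q\mapsto\mathbb{E}[J_{\tau-1}(q',D')]$ is convex, equals $(V-K-D')^+$ at $q=1$, and dominates both $0$ and $qV-K-D'$. So every chord slope is at most the chord slope from its right endpoint to $q=1$, which is at most $V$.

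Neither fallback you name supplies this. The condition $\delta<1$ says nothing about the slope of the expected continuation value, and working on the piecewise-linear value function still requires the same cross-signal argument.

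\textbf{How much rests on Step~1.} Less than you assign it. Step~2 gives $H_\tau(q,D+x)\le H_\tau(q,D)$ pointwise. The commit set at $D+x$ therefore lies inside the commit set at $D$, and the infimum can only rise, so weak monotonicity needs no interval structure. In Step~3, what excludes $\bar q_\tau(D+x)=\bar q_\tau(D)$ is $H_\tau(\bar q_\tau(D),D+x)<0$ together with continuity of $H_\tau(\cdot,D+x)$ and that inclusion. It is not the slope bound, which only makes $H_\tau(\cdot,D+x)$ nondecreasing and so cannot keep it negative just above $\bar q_\tau(D)$. Your Steps~2--3 are thus essentially the paper's $D$-shift argument, and the proposition as literally stated survives. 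What Step~1 must deliver, and as written does not, is the upper-interval reading (commit iff $q\ge\bar q_\tau(D)$) that the statement advertises and the paper's ``crossing moves right'' argument uses.

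\textbf{Strictness at $\delta=1$.} You assert $\varphi<1$ because abandonment has positive probability, whereas the paper confines strictness to $\delta<1$. Your claim is defensible given $c_s>0$: at a tie with commit, an experiment-first policy that committed with probability one would be worth at most $\bar q_\tau(D)V-K-D-c_s$ by optional stopping, so it could not tie. As written, though, you state the claim without that argument.
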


We call this an overhang, and the temptation is to cite
\citet{myers_1977} and stop; the honest comparison is narrower.
Myers's overhang is a multi-claimant \emph{agency} result: an external
creditor's fixed claim dilutes equityholders' incentive to invest.
This model has no second claimant. The introduction said it plainly:
the creditor is the venture's own future. What
Proposition~\ref{prop:overhang} exhibits is an exercise-threshold
comparative static in the
\citet{mcdonald_siegel_1986}/\citet{dixit_pindyck_1994} tradition: a
higher strike raises the optimal-exercise trigger, amplified by the
forfeited option to keep experimenting. The name is borrowed for its intuition: promised repayment deters the very investment (scaling)
that would give the promise value; but the mechanism is optimal
stopping, not wealth transfer. The software reading is a caution the pro-debt
argument must carry on its own books: debt is cheap to \emph{incur}
under uncertainty precisely because repayment is improbable, but each
unit incurred \emph{raises the evidential bar} the product must later
clear, because committing now means buying the repayment at face
value. A venture that has borrowed heavily needs a stronger signal to
justify scaling; and, per Figure~\ref{fig:dp}A, the gap between the
model threshold and the static break-even is the price of the
forfeited option to keep experimenting.

\subsection{The refactoring-pivot theorem}

\begin{proposition}[Repayment timing]\label{prop:pivot}
Extend \eqref{eq:bellman} with a repay action: pay $x \in (0, D]$ now,
continue in state $(q, D - x)$.
\textbf{(a)} With no carrying cost, repaying before commitment is
weakly dominated: paying $x$ now costs $x$; deferring costs
$\varphi\,x \le x$, strictly less whenever $\varphi < 1$. Optimal
repayment concentrates at the commitment boundary.
\textbf{(b)} With a per-period carrying cost $\rho$ on the
outstanding stock (interest: slowed iteration, onboarding drag),
immediate repayment of a unit becomes optimal iff the discounted
expected interest stream saved exceeds the deferral discount:
$\rho \cdot \mathbb{E}\big[\sum_{u \ge 0} \delta^u
\mathbf{1}\{\text{debt still held at } u\}\big] \;>\; 1 - \varphi$.
(Proof: Appendix~\ref{app:proofs}.)
\end{proposition}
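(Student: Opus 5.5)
Part (a) reduces to a one-step comparison using the deferral bound of Lemma~\ref{lem:shape}; part (b) compares the same deferral discount against a carrying cost saved.

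\textbf{Part (a).} Let $\tilde J_\tau$ be the value function with the repay action added. First I will show that adding repay does not change the value function when there is no carrying cost, i.e.\ $\tilde J_\tau = J_\tau$. Fix a state $(q, D)$ and a repayment $x\in(0,D]$. Repaying yields $-x + J_\tau(q, D-x)$. Applying \eqref{eq:lipschitz} at the point $(q, D-x)$ gives
\[
J_\tau(q, D-x) - J_\tau(q, D) \le \varphi_\tau(q, D-x)\,x \le x .
\]
Hence $-x + J_\tau(q, D-x) \le J_\tau(q,D)$. Repaying is therefore weakly dominated by simply continuing under the original policy. It is strictly dominated whenever $\varphi_\tau(q,D-x) < 1$ and the deferral bound is attained with room to spare.

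This one-step comparison feeds an induction on $\tau$. Suppose $\tilde J_{\tau-1}=J_{\tau-1}$. Then every non-repay branch of the extended Bellman equation coincides with \eqref{eq:bellman}. The one-step argument above removes the repay branch, so $\tilde J_\tau = J_\tau$.

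Concentration at the boundary follows from how debt enters the commit branch. Commit pays $D_t$ in full, and no other action pays any of it. So along any optimal path, debt is paid exactly at $\tau_c$, or extinguished on abandonment. Interpreting $\varphi$ as $\mathbb{E}[\delta^{\tau_c}\mathbf 1\{\text{commit}\}]$ then makes ``deferring costs $\varphi x$'' literal: the promised unit is paid only at the commit time, discounted.

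The step I expect to be the main obstacle is the deferral bound itself, specifically the claim that the marginal cost of debt equals $\varphi$ rather than merely being at most $x$. I would prove it by a policy-fixing argument. Take the optimal policy $\pi^\ast$ from $(q,D)$ and run it unchanged from $(q,D+x)$. Every payoff shifts only in the commit branch, and there by exactly $-x$. So
\[
J_\tau(q,D+x) \ge J_\tau(q,D) - x\,\mathbb{E}^{\pi^\ast}\bigl[\delta^{\tau_c}\mathbf 1\{\text{commit}\}\bigr].
\]
This is exactly the needed inequality, with $\varphi$ evaluated under the policy optimal at the lower debt. I would state the bound with that convention, which is consistent with Lemma~\ref{lem:shape}.

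\textbf{Part (b).} Add $-\rho D_t$ to each period's flow payoff. Compare two plans for one marginal unit: repay it now, or hold it and follow the policy that is optimal with the unit held. Repaying now costs $1$ and saves $\rho\,\mathbb{E}[\sum_u \delta^u \mathbf 1\{\text{held at }u\}]$ in interest. Holding costs $\varphi$, obtained by the same policy-fixing argument as in part (a), now with the carrying-cost stream accounted separately. Repaying is therefore optimal iff the interest saved exceeds $1-\varphi$.

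Here the comparison is marginal and the policies may differ between the two plans, so the ``iff'' holds only for the unit at the margin, with both $\varphi$ and the holding indicator evaluated under the relevant optimal policy. I would flag this explicitly rather than claim a global characterization.
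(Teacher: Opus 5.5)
Your proposal is correct and is essentially the paper's proof. Part (a) is the same coupling of ``repay $x$ now, then follow the optimal policy from $(q,D-x)$'' against ``hold, follow the same action path, and pay the extra $x$ only if that path commits,'' which is the deferral bound of Lemma~\ref{lem:shape} applied at $(q,D-x)$, and part (b) is the same marginal comparison of the saved interest stream against the forgone deferral discount $1-\varphi$. Your induction showing that the repay action leaves the value function unchanged tightens the paper's sketch, with two small fixes: consecutive instantaneous repays merge into one, which is why the repay branch really is $-x+J_\tau(q,D-x)$; and $\varphi_\tau(q,D-x)<1$ alone already gives strictness, so the ``room to spare'' condition can be dropped. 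Your caveat that the \emph{iff} in (b) is only a marginal, fixed-policy statement applies equally to the paper's own argument.
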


Part (a) turns an anecdote into a theorem with a timestamp. That
teams refactor heavily when a product finds market fit is widely
reported by practitioners but has, to our knowledge, never been
measured as an event-time phenomenon in repository data; here it is
derived as the \emph{optimal exercise policy of the conversion
option}: repayment is the strike, and one does not pay a strike on an
option one may yet allow to expire. The empirically loaded content is
the \emph{absence} half: away from validation events, mid-course
repayment of disposable debt should be rare in ventures behaving
consistently with the model, a prediction (EP3$'$) that repository
data can refute, with the timing of validation events observable as
funding rounds and the repayment burst observable with refactoring
detection \citep{tsantalis_2018,silva_2016}. Part (b) is the formal
version of ``pay down high-interest debt first'': interest justifies
early repayment only when it outruns the survival discount; and since carrying costs scale with how much the team must \emph{touch}
the indebted code, it simultaneously rationalizes the practitioner
rule of repaying debt on hot paths and ignoring it in cold corners.

\subsection{Signal quality and epistemic debt}

The value of every continuation branch of \eqref{eq:bellman} is
nondecreasing in informativeness (higher $\alpha$, lower $\beta$
refine the posterior fan; a Blackwell-dominated experiment is worth
weakly less). Epistemic debt (Definition~\ref{def:epistemic})
degrades $(\alpha, \beta)$ and therefore attacks $J$ directly, independent of any repayment. A shortcut that saves $c$ but blurs the
signal can carry negative net value even at $\varphi = 0$; no
repayment holiday redeems an experiment that cannot discriminate. In
the taxonomy of Section~\ref{sec:danger} this makes instrumentation
the one component of an experimental build that is never a legitimate
site for debt, which is, we note, exactly the discipline
empirically associated with founder outperformance
\citep{camuffo_2020,koning_2022}.

% =====================================================================
\section{The danger zone: toxicity as loss of contingency}
\label{sec:danger}

Let a candidate debt item impose an unconditional expected loss
$\ell > 0$ per period held: a hazard rate $\lambda$ of an event
costing $L$ ($\ell = \lambda L$): breach of an unpatched
vulnerability, loss of unbacked-up data, a compliance penalty. The
defining feature is that $\ell$ accrues on \emph{both} branches,
survival and death alike, from incurrence until repair.

\begin{proposition}[Toxicity]\label{prop:toxic}
\textbf{(a)} If $\ell \ge c_r - c_s$, the cheap-with-toxic build is
strictly dominated by the robust build at every state $(q, D)$, every
horizon, and every prior: the per-period saving is exhausted by the
unconditional loss before any option value accrues.
\textbf{(b)} For every hazard $\lambda > 0$ and loss $L > 0$, the
toxic build is weakly dominated by its \emph{clean twin}, the otherwise-identical build (same cost $c_s$, same signal, same debt
increment) without the hazard, at every state, every horizon, and
all $(p, V, D, T)$; strictly wherever continuation value is positive.
(Proof: Appendix~\ref{app:proofs}.)
\end{proposition}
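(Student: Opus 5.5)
Both parts are proved by a coupling (policy-replication) argument on the extended Bellman equation \eqref{eq:bellman}. The toxic build is modelled as a fifth action, ``experiment cheap-toxic'': it pays $c_s$, adds $d$ to the debt stock, and yields the same signal. It also charges $\ell$ in every subsequent period until the item is repaired (or, in the simplest version, for the one period it is held). For (a) we read ``per-period saving'' as applying to a single period: the toxic action saves $c_r - c_s$ relative to the robust action and loses at least $\ell$ that same period. The plan is to show that, at every $(q,D,\tau)$, the value of choosing toxic is at most the value of choosing robust, with strict inequality. We then lift this from a single action to whole policies by induction on $\tau$.

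\textbf{Step 1 (part (b), clean twin).} Fix a horizon $\tau$ and a state $(q,D)$, and take any policy $\pi$ that uses the toxic action at some histories. Construct $\pi'$ by replacing every toxic action with its clean twin and leaving everything else unchanged. Signals, Bayes updates, and debt increments are identical, so the two policies induce the same probability law over histories. The realized cash flows differ only by the accumulated loss terms $-\delta^{u}\ell\,\mathbf{1}\{\text{hazard held at } u\}$, each of which is $\le 0$. Hence $V^{\pi'} \ge V^{\pi}$ pathwise in expectation, and taking suprema gives $J^{\text{clean}}_\tau \ge J^{\text{toxic}}_\tau$. For strictness, note that positive continuation value means the optimal toxic-using policy actually continues with positive probability. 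Then at least one loss term with $\ell = \lambda L > 0$ is incurred with positive probability, so the gap is at least $\ell$ times that discounted probability, which is positive. No property of $(p,V,D,T)$ is used, which gives the claimed uniformity.

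\textbf{Step 2 (part (a)).} Compare the toxic action with the robust action at the same state. The robust action leaves the debt stock at $D$, whereas toxic raises it to $D+d$. Lemma~\ref{lem:shape} says $J$ is nonincreasing in $D$, so $\mathbb{E}[J_{\tau-1}(q',D)] \ge \mathbb{E}[J_{\tau-1}(q',D+d)]$. The continuation values therefore favour robust. For the immediate payoffs, toxic gives $-c_s - \ell \le -c_r$ when $\ell \ge c_r - c_s$. Combining, the toxic Q-value is at most the robust Q-value. Strictness comes from one of two sources. Either the inequality $\ell \ge c_r-c_s$ is strict, or equality holds and any further loss period makes it strict. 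If instead the loss is charged only in the incurrence period and $\ell = c_r - c_s$ exactly, we fall back on the strict $D$-monotonicity wherever continuation value is positive. Where the hazard persists over multiple periods, the extra losses only strengthen the inequality. To handle this cleanly, we induct on $\tau$: assume that at horizon $\tau-1$ the toxic-augmented value function equals the value function without the toxic action. Then the one-step comparison shows the toxic action is never the argmax at horizon $\tau$.

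\textbf{Main obstacle.} The delicate step is strictness in (a) when $\ell = c_r - c_s$ exactly and the loss is charged only once. The one-period saving then cancels precisely, and we must extract a strict gain from $D$-monotonicity. Strict monotonicity holds only where commitment has positive probability, which is where the deferral bound \eqref{eq:lipschitz} gives $\varphi>0$. Outside that region the two actions can tie. I would first try to settle this by reading the statement's ``strictly'' as ``weakly, and strictly wherever $\ell > c_r - c_s$ or $\varphi_{\tau-1}(q',D)>0$ with positive probability''. The remaining subtlety is modelling persistence of the hazard until repair consistently with the repay action of Proposition~\ref{prop:pivot}. Under the coupling in Step 1, repair times are identical in $\pi$ and $\pi'$, so persistence affects only the size of the gap, never its sign.
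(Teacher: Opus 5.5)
Your route is the paper's: part (b) is the same path-by-path coupling with the clean twin. Part (a) is the appendix's ``flow weakly worse, repayment obligation weakly larger'' argument, which you make explicit through the $D$-monotonicity of Lemma~\ref{lem:shape} and an induction on $\tau$.

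The obstacle you flag is real, and it is a defect of the statement, not of your argument. Suppose $\ell$ is charged in the incurrence period and the item is extinguished by abandonment. Then the strict claim in (a) fails at $\ell = c_r - c_s$: at $\tau = 1$ with both posteriors below $(K+D)/V$, the toxic and robust actions are both worth exactly $-c_r$. The appendix asserts ``strictly in the flow,'' which is false at equality, since there $c_s + \ell = c_r$. Your weakened reading is therefore the correct statement: weak dominance, strict when $\ell > c_r - c_s$ or $\mathbb{E}[J_{\tau-1}(q',D)] > 0$.

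Two technical corrections.

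First, strictness of the $D$-monotonicity cannot come from \eqref{eq:lipschitz}, because that is an \emph{upper} bound on $J(q,D)-J(q,D+x)$. Use the mirror replication instead: from $(q',D)$, run the optimal policy of $(q',D+d)$. This gives $J_{\tau-1}(q',D) - J_{\tau-1}(q',D+d) \ge \varphi_{\tau-1}(q',D+d)\,d$. The right side is positive whenever $J_{\tau-1}(q',D+d)>0$, since positive value can only arise through commitment. The remaining case, $J_{\tau-1}(q',D+d) = 0 < J_{\tau-1}(q',D)$, is trivially strict.

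Second, the paper's hazard is absorbing as well as a flow: each held period multiplies continuation by $(1-\lambda)$ and adds $-\lambda L$. In that version your claim that the two policies ``induce the same probability law over histories'' fails, because ruin truncates paths. The coupling still survives because $J \ge 0$, as abandonment is always available. That $(1-\lambda)$ factor is exactly why the paper qualifies (b) with ``wherever continuation value is positive.'' Under your pure-flow model the action-level comparison in (b) is strict at every state, since $\ell>0$ is paid at incurrence.
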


\begin{remark}[The right baseline]\label{rem:baseline}
Part (b) deliberately compares the toxic build against its clean twin,
not against the robust build. The stronger-sounding universal, ``no
parameters make the toxic build optimal \emph{against the robust build},'' is false: for small $\lambda$ and short horizons the cheap
build's cost saving can outweigh a finite expected ruin loss
(numerically, at $\lambda = 0.15$, $T = 6$, the toxic-cheap build
beats the robust build on a nontrivial share of states: 7--18\%
depending on the hazard's functional form, absorbing-only versus
absorbing plus flow loss, across two independent replications; see the
verification packets cited under Data and artifact availability, which
surfaced this and supplied the repair adopted here). Toxicity is not
``never worth it relative to building robustly''; it is ``never worth
it relative to not carrying the hazard.'' The shortcut and the hazard
are separable instruments, and a rational venture takes the shortcut
without the hazard, which is exactly the necessity claim EP5 tests.
\end{remark}

Toxicity is thus not extreme strategic debt; it is a \emph{category}
failure. Strategic debt's entire economics rest on the contingency of
its cost ($\varphi < 1$); a toxic item's cost is unconditional
($\varphi$-pricing does not apply), which places it economically in
the right-hand column of Figure~\ref{fig:quadrants}: a chosen structural waste. The proposition's form (dominated for \emph{all} $p$) is a necessary-condition claim: absence of toxic debt is
necessary for the strategy's rationality, at any uncertainty level.
That is precisely the logical form Necessary Condition Analysis is
built to test \citep{dul_2016}, and EP5 registers it.

Two toxic classes deserve naming because the informal discourse
misses both:

\begin{description}[itemsep=2pt]
\item[Contagious debt.]\footnote{A recent systematic mapping names the propagation phenomenon in autonomous-agent settings ``agentic technical debt'' \citep{tukur_2026_agtd}; it is a taxonomy, not an economics, and the informal status of contagion here is unchanged by it.} A shortcut embedded in an interface that
other components build against. Its repayment cost is not a constant
$D$ but an endogenously growing $D_t$: each dependent component adds
to the eventual conversion. Contagion converts a success-loaded item
into one whose \emph{strike grows with the venture's own progress} (the option analogue of negative amortization) and bounded contagion
is the real content of the practitioner rule ``hide your hacks behind
a clean interface.'' Architectural debt's documented dominance among
debt costs \citep{besker_2018,martini_2015} is consistent with this
reading, though establishing contagion as its mechanism is an
empirical task, not a fact we may assert. One honesty note: in the
model of Section~\ref{sec:sequential} the debt increment $d$ is
exogenous, so contagion is an \emph{informal} extension here: endogenizing $D_t$ in the number of dependents, and deriving the
strategic$\to$toxic degeneration this subsection describes, is open
(\S\ref{sec:discussion}).
\item[Epistemic debt.] Definition~\ref{def:epistemic}: corruption of
the signal is a loss on the information side, unredeemable by any
repayment schedule.
\end{description}

Table~\ref{tab:taxonomy} applies the cost-loading test to common
items. Classification is, moreover, not a one-shot act. The
cost-loading answer is given at incurrence, but real artifacts drift:
prototype code becomes load-bearing without anyone re-asking the question, the most common way a strategic item silently migrates
cells. The discipline therefore carries a \emph{re-classification
trigger}: when an item acquires its first external dependent, or
survives a pre-declared number of deploys, the cost-loading question
is asked again, and an item that fails on re-ask is treated as having
migrated. The empirical program's instruments record these migrations
rather than assuming the initial label persists.

\begin{table}[t]
\centering
\caption{The cost-loading test applied to common early-stage debt
items. The classifying question: \emph{if the venture died tomorrow,
would any of this item's cost still have been paid?} The test applies
to the concrete artifact, not the category name: a category can
span cells (hardcoded feature flags are strategic; hardcoded
credentials are toxic).}
\label{tab:taxonomy}
\small
\begin{tabular}{@{}>{\raggedright\arraybackslash}p{0.36\textwidth}>{\raggedright\arraybackslash}p{0.33\textwidth}>{\raggedright\arraybackslash}p{0.23\textwidth}@{}}
\toprule
\textbf{Item} & \textbf{Cost structure} & \textbf{Class} \\
\midrule
Hardcoded configuration / single-tenant assumptions &
Repaid only on scale-up (per artifact: hardcoded \emph{credentials}
are toxic) & Strategic \\
Monolith where services will be needed &
Strike grows as components accrete against it & Contagious (strategic
only while dependents are few) \\
Skipped CI / build automation for a prototype &
Repaid only on team growth & Strategic \\
Ad-hoc manual deploys to live production &
Hazard $\ell$ scales with blast radius & Toxic-leaning \\
Skipped tests on \emph{throwaway} experiment code &
Repaid only if code survives & Strategic (borderline) \\
Skipped tests on the \emph{measurement} path &
Degrades $\alpha,\beta$ now & Epistemic (toxic) \\
No analytics / instrumentation &
Experiment cannot discriminate & Epistemic (toxic) \\
Shortcut inside a depended-upon interface &
Strike $D_t$ grows with adoption & Contagious (toxic if unbounded) \\
Unpatched security vulnerability &
Hazard $\ell$ accrues while held & Toxic \\
No backups of irreplaceable data &
Ruin hazard, both branches & Toxic \\
Silent non-compliance (regulated data) &
Hazard + unbounded $L$ & Toxic \\
\bottomrule
\end{tabular}
\end{table}

Beyond item-level toxicity, three \emph{regime} conditions bound the
strategy itself: long validation cycles push $\varphi \to 1$ (the
survival discount vanishes when repayment is near-certain before the signal arrives: enterprise and infrastructure products); pivot-rich
domains raise $\Delta\Pi$ (Corollary~\ref{cor:kill}); and platform products
maximize contagion because external developers multiply dependents.
The model thus predicts its own boundary: strategic debt is an
instrument of \emph{consumer-speed, kill-style, edge-of-system}
experimentation, and ventures outside that regime should borrow
sparingly however uncertain they are.

% =====================================================================
\section{Falsifiable propositions and the empirical program}
\label{sec:empirical}

The theory yields five empirical propositions. Two are advanced to
\emph{primary} status (slated for pre-registration and execution, not merely specification), and three are recorded as successors.
The distinction is deliberate: a program that specifies five studies
it will not run converts falsifiability into decoration, and the
primary pair was chosen because each is executable from repository
history alone, without the recall, reachability, and
unrealized-value-measurement problems that gate the successors. The
full protocol (instruments, sampling frames, sample sizes, analysis plans, and the pre-registration discipline) is specified
in the companion document \texttt{EMPIRICAL\_PROGRAM.md}; this
section states the propositions and the design commitments that make
them testable rather than narratable.

\begin{description}[itemsep=3pt]
\item[EP1 (Debt rule): successor, not promised.] Ventures whose early build choices conform to \eqref{eq:rule} (cheap builds where salvage is low and validation
odds are long, robust builds where pivot value is high) exhibit
better information-per-cost trajectories than nonconforming peers,
with the dependent variable operationalized as hypotheses brought to
disposition (validated or killed) per unit of build spend, and
controls for founder experience, team size, sector, and stage.
Instrument: retrospective founder/CTO cohort study with
item-level debt recall anchored to repository evidence.
\item[EP2 (Overhang): successor.] Among ventures reaching a validation event,
time from validation to scale commitment increases with the
accumulated debt stock at validation, controlling for team size and
funding. Instrument: repository + funding-record panel.
\item[EP3$'$ (Refactoring pivot; sharpened from the original EP3 specification): primary.] Repayment activity
(refactoring commits, test-suite construction, modularization)
concentrates in a window following \emph{validation} events, and is
correspondingly \emph{rare} during the experimentation regime;
detection by \citet{tsantalis_2018}. Event-time refactoring analysis
itself is not new, and the claim is scoped against that literature:
refactoring aligns with \emph{release} dates in some projects
\citep{guana_tsantalis_2013}, not in others \citep{hoque_2014}, and
release-wise refactoring patterns have since been characterized at
scale \citep{noei_2025}. What is unmeasured is timing around
\emph{validation} events specifically (the model's $q_t$ crossing $\bar q$), which no release calendar proxies. The design confronts
its central confound directly rather than by concordance checking:
funding rounds cause refactoring through headcount growth and
productionization mandates, mechanisms unrelated to option exercise,
so funding-as-proxy cannot separate the two causal paths. The
discriminating cells are therefore \emph{validation without funding}
(launch metrics, usage milestones, signed pilots preceding any round)
against \emph{funding without validation} (rounds raised on narrative
before usage evidence); the model predicts the refactoring burst
follows the first and not the second. The \emph{absence} clause (rarity during experimentation) is what makes this a risky
prediction rather than a redescription of lore.
\item[EP$\Pi$ (Pivot salvage): primary.] The salvage parameters
this paper prices have never been measured, and they are measurable:
for ventures whose history contains a codeable pivot, the salvage
realization is the fraction of the pre-pivot codebase surviving into
the post-pivot product, computed from repository history by file
lineage (renames and content-similarity tracked, per the perils
discipline of \citealp{kalliamvakou_2016}). Two predictions.
\emph{First}, salvage is nondegenerate: the surviving fraction is
materially above zero for a substantial share of pivots, the empirical ground for modeling $\Pi_s > 0$ rather than assuming debt
code worthless. \emph{Second}, components classifiable ex ante as
robust-built survive pivots at a higher rate than debt-classified
components within the same venture; the within-venture contrast is
the measured $\Delta\Pi > 0$ that
Corollary~\ref{cor:kill} turns on. This gives the model's most
distinctive parameter its own instrument instead of leaving it a
calibration.
\item[EP4 (Volatility): successor, not promised.] Across sectors, deliberate early debt
intensity increases with outcome-value dispersion, conditional on
cycle length and financing structure
(Remark~\ref{rem:riskneutral}). Dispersion is computed over the
\emph{full cohort-entry sample}, including ventures that die unrecorded; realized-exit variance alone would reintroduce
precisely the survivorship truncation the sampling design exists to
avoid.
\item[EP5 (Toxicity ceiling): successor.] In the $(x, y)$ plane of toxic-debt
presence vs.\ venture outcome, the upper-right corner is empty:
survival to scale is bounded above by a ceiling in toxic exposure.
Logical form: necessity, not correlation; method: Necessary Condition
Analysis \citep{dul_2016}, with the ceiling predicted from toxicity
severity and exposure-duration proxies that are \emph{not} part of
the Phase-0 classification question, so the test carries
information independent of the coder's label rather than confirming
coder consistency. Strategic-debt intensity, by contrast,
carries \emph{no} such ceiling; the contrast between the two
NCA plots is the construct validity test of
Definition~\ref{def:loading}.
\end{description}

\paragraph{The survivorship confound, confronted.} The fatal design
error available in this area is sampling on outcome: successful
ventures' repositories are visible \emph{because} they succeeded, and
``survivors refactor after funding'' cannot then distinguish optimal
exercise from mere survival. The program therefore commits to
\emph{cohort-entry sampling}: ventures enter the frame at first
public commit / accelerator intake / launch listing (before outcomes exist), and both surviving and dying ventures are followed
to disposition, with the perils of repository mining handled per
\citet{kalliamvakou_2016}. Two refinements keep the commitment
honest. First, the entry frames differ in residual selection: accelerator intake and launch listings condition on clearing a prior hurdle correlated with outcomes, so the least-selected frame
(repository-signature entry) is reserved for confirmatory tests.
Second, cohort entry does not remove every bias: private repositories
under-sample stealth ventures, and outcome-correlated deletion
right-censors failures; the program records visibility and deletion
as first-class attrition outcomes per wave rather than assuming them
away. EP3$'$'s prediction is conditional on
reaching a validation event, and the model makes the complementary
prediction for non-survivors: debt incurred, never repaid: the modal, silent, \emph{correct} outcome of the strategy, observable
only under entry sampling.

\paragraph{Discipline.} Carried over from the sibling program
\citep{azarang_2026_sw} and from this research program's standing
practice: predictions frozen and analyzers committed before data are
read; verdicts adjudicated mechanically against pre-registered
criteria; negative and ambiguous results reported as first-class.
The propositions above are stated in advance of any data collection;
no pilot data existed when they were fixed.

% =====================================================================
\section{Discussion}\label{sec:discussion}

\subsection{What the reframing buys}

The moral framing of technical debt survives because it is usually
locally correct: in mature systems, where survival is near-certain,
$\varphi \approx 1$, debt trades at face value, and the engineering
conscience prices it right. The framing fails at the boundary where
modern software increasingly lives: pre-validation, where
$\varphi$ is small and falling in uncertainty, and where the
expensive error is inverted: teams that build robustly for products
that die have paid $C_r$ for information $C_s$ would have bought.
The model's slogan-sized content: \emph{debt you repay only if you
win is an option premium; debt that can sink you while you wait is a rent}; and the whole management problem of early-stage engineering
is keeping the first kind from quietly becoming the second, through
contagion, through signal corruption, or through a regime shift the
team failed to notice.

For practice, the model reduces to four disciplines: classify every
shortcut by the cost-loading question at incurrence (Table~\ref
{tab:taxonomy}) and re-ask it on the re-classification triggers of
Section~\ref{sec:danger}; defer repayment to the validation boundary
unless carrying cost demonstrably exceeds $1 - \varphi$
(Proposition~\ref{prop:pivot}); never borrow against the
measurement apparatus (Definition~\ref{def:epistemic}); and treat the
kill-vs.-steer call (Corollary~\ref{cor:kill}) as itself uncertain: salvage is often discovered post hoc, so a team unsure of its
experiment's kind should price $\Delta\Pi$ as a random variable with a
conservative floor, not as zero.

\subsection{Debt when code is cheap}
\label{sec:agentic}

The model's parameters are not constants of nature, and the current
shift to LLM-assisted and agent-driven development moves three of them
at once, in directions the comparative statics price cleanly. First,
the robustness premium $C_r - C_s$ \emph{falls}: when code is cheap to
produce, the cheap build's savings shrink, because the robust build is
also cheap. By the debt rule~\eqref{eq:rule}, a smaller left side
justifies strictly less debt at every belief. Second, repayment cost
falls with it: refactoring is increasingly agent labor, which lowers
the effective strike and, through
Proposition~\ref{prop:pivot}\,(b), makes early repayment optimal at
lower carrying costs. Third, the hazard side of the toxicity boundary
moves the other way. The emerging empirical record on AI-generated
code finds debt-like issues accumulating faster and persisting: a
multivocal review reports LLM assistance amplifying code, design, and
documentation debt \citep{ehsani_2026}, a production-scale quality
profile of AI-generated C++ \citep{tran_2026_aicpp}, and a large-scale study of
verified AI-authored commits finds a substantial share of
AI-introduced issues surviving to repositories' latest revisions
\citep{liu_2026_aidebt}. In the model's terms, agent-speed
development lowers the price of \emph{incurring} debt while raising
the flow rate of \emph{unclassified} debt; and unclassified debt
is where toxicity hides, since nobody applied
Definition~\ref{def:loading}'s question to it. The net prediction is
sharp and testable: as build costs fall, the binding constraint on
rational debt strategy shifts from the cost side (is the shortcut
worth it?) to the classification side (is the shortcut's cost
success-loaded?). Tooling that makes code cheap without making
cost-loading legible should therefore produce exactly the pattern the
early empirical record shows: more debt, incurred faster, with a
rising unconditional component. None of this is adjudicated here; it
is the model's reading of a literature that so far has measurements
and no decision theory.

\subsection{Limitations}

This is a theory paper. The calibration in Figure~\ref{fig:dp} is illustrative: committed and reproducible but not estimated from
data; the empirical program is specified, not executed, and the
propositions could die on contact with it (EP3$'$'s absence clause and
EP5's ceiling are genuinely risky). The model is risk-neutral;
founder risk aversion and financing frictions would lower debt's
appeal, while venture portfolio logic
\citep{kerr_2014} would raise it; the net direction is an open
question. $D$ is exogenous where contagion argues it should be a
controlled state variable. Competition is absent: racing rivals
erode the option value of waiting \citep{mcdonald_siegel_1986} and
can rationalize debt beyond \eqref{eq:rule}. Human costs of debt-laden
codebases (morale, hiring, the difficulty of retaining engineers in a shantytown) enter, if at all, through $\rho$, and deserve better;
debt-driven engineer attrition in particular is structurally closer
to the ruin process of Proposition~\ref{prop:toxic} than to a linear
carrying cost, and routing it through $\rho$ understates it. Three
further limits deserve naming. The model holds per-experiment costs
constant in the debt stock, but accumulated debt taxes the velocity
of everything built on top of it; with $c_s(D_t), c_r(D_t)$ weakly
increasing, debt consumes the very runway the option needs to mature,
and the overhang and timing results should be re-derived under that
coupling. Competition deserves decomposition rather than the
concession above: racing rivals erode the option value of waiting,
which would \emph{compress} the overhang gap and can force repayment
earlier than Proposition~\ref{prop:pivot}'s boundary; competitive intensity is a moderator EP3$'$ should record, not noise. And the
empirical program's limits are measurement limits: every proxy in
Section~\ref{sec:empirical} (validation events, debt signals,
toxicity classes) carries error the pre-registration must bound, not
merely acknowledge.
Finally, the two-paper program's boundary claim (toxic debt behaves
economically like structural waste) is stated, not yet measured;
its test belongs to the sibling program's instruments.

% =====================================================================
\section{Conclusion}\label{sec:conclusion}

Technical debt began as a borrowing metaphor and spent thirty years
being moralized into a pathology. Priced properly (as a call on the
validated product with a costless default, its repayment discounted
by the probability of ever owing it), deliberate debt under
uncertainty is what Cunningham said it was: a rational instrument.
The price of taking that claim seriously is precision about where it
stops: repayment obligations that grow with your own success,
shortcuts that blind your experiments, and hazards that accrue
whether or not you survive are not strategic debt at higher leverage;
they are different instruments with rent-like economics, and no level
of uncertainty redeems them. The model derives, rather than assumes,
the observable signatures of rational borrowing: repayment bursts
at validation, an evidential bar that rises with the debt stock, debt
intensity that tracks outcome volatility; and the empirical program
registers them as predictions that can fail. Whether founders in the
wild borrow like option traders or like optimists is now a
measurement question, which is where we intend to take it.

% =====================================================================
\section*{Data and artifact availability}

This draft contains no empirical data. The numerical solutions in
Figures~\ref{fig:regions} and~\ref{fig:dp} are fully reproducible
from the committed solver (\texttt{figs.py}; deterministic, no seed
required). The empirical program, including instruments and
pre-registration templates, is versioned alongside this manuscript.
An independent verification packet is archived in the repository
(\texttt{review/2026-07-23-operon/}): a four-lens adversarial referee
review; an independent re-implementation of the solver reproducing
every headline number to machine precision, with a 91-configuration
sensitivity sweep (overhang monotonicity and repayment-deferral held
in all configurations); a three-modality prior-art sweep
(arXiv, OpenAlex, web) behind the Section~\ref{sec:related}
positioning; and per-entry bibliography verification against
Crossref/OpenAlex. The v0.1$\to$v0.2 revision, including the
correction of Proposition~\ref{prop:toxic}(b) recorded in
Remark~\ref{rem:baseline}, responds to that packet. The v0.3
envelope-Lipschitz proof was itself adversarially verified after
authoring (two independent formal referees over four attack surfaces (FOSD boundaries, the $D$-shift under the action maximum,
convexity propagation, and the $\delta = 1$ edge); verdict: sound,
no counterexample constructible; report archived in the same
directory), and its three invariants were reproduced independently
of the committed solver.

\bibliographystyle{plainnat}
\bibliography{references}

% =====================================================================
\appendix
\section{Proof sketches}\label{app:proofs}

\paragraph{Lemma~\ref{lem:shape}.}
Convexity and monotonicity in $q$: by backward induction. $J_0$ is a
maximum of affine functions of $q$. The Bayes operator maps posteriors
affinely in the prior along each signal branch, and
$\mathbb{E}[J_{\tau-1}(q', \cdot)]$, the expectation of a convex
function of the posterior over the signal distribution induced by $q$, is convex in $q$ (standard for value functions of Bayesian
stopping problems; cf.\ the optimal-stopping treatments in
\citealp{dixit_pindyck_1994}); maxima of convex functions are convex.
Monotonicity in $D$ is immediate. For the deferral bound
\eqref{eq:lipschitz}: from state $(q, D + x)$, replicate the optimal
policy of state $(q, D)$; every payoff is identical except on commit,
where the replicating policy pays $x$ more, discounted to
$\delta^{\tau_c}$ at the (random) commitment time and paid only on
the commitment event. Hence
$J(q, D + x) \ge J(q, D) - \varphi\, x$ with
$\varphi = \mathbb{E}[\delta^{\tau_c}\mathbf 1\{\text{commit}\}] \le 1$.
\hfill$\square$

\paragraph{Proposition~\ref{prop:overhang}.}
Commit value $f(q, D) = qV - K - D$ falls in $D$ at rate $1$;
continuation value falls at rate at most $\varphi \le \delta < 1$
(Lemma~\ref{lem:shape}, noting one experiment must elapse before any
commitment). Both are continuous in $q$; $f$ crosses the
continuation-or-abandon envelope from below (slope $V$ against a
value-function slope in $q$ bounded by $V$, with strict inequality
wherever continuation retains option value). An increase in $D$
therefore lowers $f$ strictly more than the envelope at every $q$,
moving the crossing right; where continuation value is zero
(abandon region boundary) the threshold is $(K + D)/V$, increasing in
$D$ directly. Monotonicity of $\bar q_\tau(D)$ follows; strictness
holds wherever the forfeited option has positive value. The
single-crossing step is closed by the following lemma.

\emph{Lemma (envelope Lipschitz bound).} For every $\tau$, $D$, and
$q' \ge q$: $J_\tau(q', D) - J_\tau(q, D) \le V(q' - q)$, and the
same bound holds for the continuation envelope $E_\tau$.

\emph{Proof.} Write
$T[h](q) = \Pr(\mathrm{pass} \mid q)\, h(q^{+}(q)) +
\Pr(\mathrm{fail} \mid q)\, h(q^{-}(q))$
for the one-experiment Bayes operator. Three facts. (i) $T$ preserves
affine functions exactly: for $h(x) = a + bx$,
$T[h](q) = a + b\,\mathbb{E}[q' \mid q] = a + bq$ by the martingale
property of Bayesian posteriors. (ii) For an informative signal
($\alpha > \beta$), the posterior distribution is FOSD-nondecreasing
in the prior: $q^{+}$ and $q^{-}$ are both nondecreasing in $q$,
with derivatives $\alpha\beta / \Pr(\mathrm{pass} \mid q)^2$ and
$(1-\alpha)(1-\beta) / \Pr(\mathrm{fail} \mid q)^2$, and the weight
on $q^{+}$ is nondecreasing; hence $T[g]$ is nonincreasing
whenever $g$ is. (At the boundaries $q \in \{0, 1\}$, and for
$\alpha, \beta \in \{0, 1\}$, monotonicity is weak; weak is all the
argument uses.) (iii) If $h$ is convex with upper slope at most $V$, then
$g := h - u$ with $u(x) = h(1) - V(1 - x)$ is nonnegative (the
Lipschitz bound applied at the right endpoint), convex, and has
$g(1) = 0$, hence is nonincreasing: a convex function with positive
slope anywhere before $1$ cannot return to $0$ at $1$. Now induct on
$\tau$. $J_0 = \max(0,\, qV - K - D)$ is convex with upper slope
$\le V$. If $J_{\tau-1}(\cdot, D')$ is convex with upper slope
$\le V$ for every $D'$, then by (i)--(iii),
$T[J_{\tau-1}] = u + T[g]$ is an affine function of slope $V$ plus a
nonincreasing function, so its upper slope is $\le V$; the experiment
actions $-c + \delta\, T[J_{\tau-1}]$ have upper slope
$\le \delta V \le V$; abandon has slope $0$; commit has slope exactly
$V$; and a pointwise maximum of functions with upper slope $\le V$
has upper slope $\le V$. Convexity propagates explicitly rather than
by citation: writing $\varphi(x, y) = (x + y)\, h\!\left(x / (x +
y)\right)$ for the perspective of the convex $h$ (jointly convex in $(x, y)$), the operator decomposes as
$T[h](q) = \varphi\big(q\alpha,\, (1-q)\beta\big) +
\varphi\big(q(1-\alpha),\, (1-q)(1-\beta)\big)$, a sum of convex
functions composed with affine maps of $q$, hence convex.
$\blacksquare$

\emph{Single crossing.} The commit payoff $f(q) = qV - K - D$ has
slope exactly $V$; every non-commit action has upper slope
$\le \delta V$. Hence $f - E_\tau$ is nondecreasing (strictly
increasing for $\delta < 1$), and the commit set
$\{q : f \ge E_\tau\}$ is an upper interval $[\bar q, 1]$, so
$\bar q_\tau(D) = \inf\{q : \text{commit}\}$ reads as ``commit for
all $q \ge \bar q$.'' At $\delta = 1$ the qualification is explicit:
$f - E_\tau$ can plateau where an experiment action ties the commit
slope at $V$, so the crossing may be a flat interval rather than a
point; $\bar q$ is then the infimum of a weakly-crossed set, the
upper-interval conclusion and the $D$-monotonicity survive
unchanged, and only strictness and uniqueness of the crossing
require $\delta < 1$. With the $D$-monotonicity established above,
Proposition~\ref{prop:overhang} is proved. The lemma's content is
additionally checked numerically on the calibrated solution: the
maximum upper slope of $J$ over all $(\tau, D)$ slices equals $V$ to
machine precision, the commit set is an upper interval in all $121$
slices, and $T$'s affine-invariance holds to $9 \times 10^{-16}$.
\hfill$\square$

\paragraph{Proposition~\ref{prop:pivot}.}
(a) Compare policies from any non-commit state: (i) repay $x$ now,
then follow the optimal policy of $(q, D - x)$; (ii) hold, follow the
same action path, and pay the extra $x$ if and when that path
commits. Path (ii)'s cost for the marginal $x$ is
$\varphi x \le x$ = path (i)'s cost, with strict inequality whenever $\varphi < 1$, that is, whenever $\delta < 1$ or commitment is
neither certain nor immediate. (At $\delta = 1$ with certain
commitment the comparison is an exact indifference; the numerical
sweep exhibits the machine-precision tie.) Hence early repayment is
weakly dominated and strictly so off the commitment boundary for
$\delta < 1$.
(b) Add carrying cost $\rho x$ per period while $x$ is outstanding.
Early repayment saves the stream
$\rho x \,\mathbb{E}[\sum_{u\ge0}\delta^u \mathbf 1\{\text{held at }u\}]$
and forgoes the deferral discount $(1 - \varphi)x$; the stated
threshold is the comparison of the two. \hfill$\square$

\paragraph{Proposition~\ref{prop:toxic}.}
(a) Couple the two builds path-by-path: cheap-with-toxic pays
$c_s + \ell \ge c_r$ per period in expectation while generating an
identical signal process and a weakly larger repayment obligation;
its continuation set is therefore weakly smaller and its per-period
flow weakly worse, strictly in the flow. Dominance holds state-wise,
hence in value. (b) Couple the toxic build with its clean twin
path-by-path: identical costs, identical signal process, identical
debt trajectory; the only difference is the hazard, which on every
path weakly reduces value: each period the item is held multiplies
continuation by $(1 - \lambda)$ and adds expected loss $-\lambda L \le 0$, so
$J^{\mathrm{toxic}} = J^{\mathrm{clean}} - (\text{a nonnegative
term})$, with strict inequality wherever continuation value is
positive. Dominance is state-wise and universal in
$(p, V, D, T, \lambda)$. (A prior draft compared the toxic build
against the \emph{robust} build and claimed universal dominance; that
comparison fails at small $\lambda$ and short horizons; see Remark~\ref{rem:baseline}.) \hfill$\square$

\end{document}